\documentclass{amsart}
\usepackage{graphicx}
\usepackage[numbers]{natbib}
\usepackage{amssymb}
\usepackage{amsthm}
\usepackage{amsmath}
\usepackage{mathtools}
\newtheorem{theorem}{Theorem}
\newtheorem{corollary}{Corollary}
\newtheorem{lemma}{Lemma}
\newtheorem{example}{Example}
\newtheorem{definition}{Definition}
\newtheorem{axiom}{Axiom}
\usepackage{multicol}

\usepackage[pdftitle={Evaluating betting odds and free coupons using desirability},%
pdfauthor={Nawapon Nakharutai, Camila C. S. Caiado, Matthias C. M. Troffaes},%
pdfkeywords={betting; coupon; Choquet integration; complementary slackness}]{hyperref}
\usepackage[nosort]{cleveref}
\usepackage{doi}
\usepackage{tikz}
\usetikzlibrary{shapes.geometric, arrows}
\usetikzlibrary{er,positioning}
\usetikzlibrary{shapes}
\usepackage{subcaption}
\usepackage[font=small,skip=5pt]{caption}

\setlength{\arrayrulewidth}{0.5mm}
\setlength{\tabcolsep}{12pt}

\usepackage{multirow}
\usepackage{lscape}
\usepackage{adjustbox}
\newcommand*\rot{\rotatebox{90}}
\usepackage{algorithm}
\usepackage{algorithmic}

\newlength\myindent
\setlength\myindent{2em}

\makeatletter
\newenvironment{breakablealgorithm}
  {
   \begin{center}
     \refstepcounter{algorithm}
     \hrule height.8pt depth0pt \kern2pt
     \renewcommand{\caption}[2][\relax]{
       {\raggedright\textbf{\ALG@name~\thealgorithm} ##2\par}%
       \ifx\relax##1\relax 
         \addcontentsline{loa}{algorithm}{\protect\numberline{\thealgorithm}##2}%
       \else 
         \addcontentsline{loa}{algorithm}{\protect\numberline{\thealgorithm}##1}%
       \fi
       \kern2pt\hrule\kern2pt
     }
  }{
     \kern2pt\hrule\relax
   \end{center}
  }
\makeatother

\DeclareMathOperator{\dom}{dom}

\newcommand{\linprogref}[1]{\textup{(#1)}}

\title{Evaluating betting odds and free coupons using desirability}

\author{Nawapon Nakharutai}
\address{Durham University, Department of Mathematical Sciences, UK}
\email{nawapon.nakharutai@durham.ac.uk}

\author{Camila C. S. Caiado}
\address{Durham University, Department of Mathematical Sciences, UK}
\email{c.c.d.s.caiado@durham.ac.uk}

\author{Matthias C. M. Troffaes}
\address{Durham University, Department of Mathematical Sciences, UK}
\email{matthias.troffaes@durham.ac.uk}

\keywords{betting; coupon; Choquet integration; complementary slackness}

\begin{document}

\begin{abstract}
In the UK betting market, bookmakers often offer a free coupon to new customers. These free coupons allow the customer to place extra bets, at lower risk, in combination with the usual betting odds. We are interested in whether a customer can exploit these free coupons in order to make a sure gain, and if so, how the customer can achieve this. To answer this question, we evaluate the odds and free coupons as a set of desirable gambles for the bookmaker.

We show that we can use the Choquet integral to check whether this set of desirable gambles incurs sure loss for the bookmaker, and hence, results in a sure gain for the customer. In the latter case, we also show how a customer can determine the combination of bets that make the best possible gain, based on complementary slackness.

As an illustration, we look at some actual betting odds in the market and find that, without free coupons, the set of desirable gambles derived from those odds avoids sure loss. However, with free coupons, we identify some combinations of bets that customers could place in order to make a guaranteed gain.
\end{abstract}

\maketitle

\section{Introduction}\label{intro}
Consider the football betting market in the UK where a bookmaker typically offers fractional betting odds for possible outcomes. For example, in a match between Manchester United and Liverpool, the bookmaker offers odds in the form $a/b$ for Manchester United winning, $c/d$ for a draw and $e/f$ for Liverpool winning.
Suppose a customer accepts the odds $a/b$ by placing a stake of $b$ pounds on a Manchester United win, which he pays to the bookmaker in advance of the match. After the match, if Manchester United wins, the bookmaker will pay him $a+b$ pounds. So, if Manchester United wins, then the customer's total return will be $a$ pounds; otherwise the customer will lose $b$ pounds. 

To predict the outcome of a match, the bookmaker may encounter difficulties such as lack of data (e.g. team A has never played with team B during last five years), missing data, limited football expert opinion, or even contradicting information from different football experts. Various authors \cite{1975:williams:condprev,2007:williams:condprev,1991:walley,2014:troffaes:decooman::lower:previsions} have argued that these issues can be handled by using \emph{sets of desirable gambles}. A gamble represents a reward (i.e. money in our case) that depends on an uncertain outcome (i.e. the match result). The bookmaker can model his belief about this outcome by stating a collection of gambles that he is willing to offer. Such set is called a set of desirable gambles. Through duality, stating a set of desirable gambles is mathematically equivalent to stating a set of probability distributions.

If there are no combinations of desirable gambles that result in a guaranteed loss, then we say that a set of desirable gambles \emph{avoids sure loss} \cite{1975:williams:condprev,2007:williams:condprev}. Thus, if the bookmaker's set of desirable gambles avoids sure loss, then there is no combination of bets from which customers can make a guaranteed gain. On the other hand, if the set does not avoid sure loss, then there is a combination of bets that customers can exploit to incur a sure gain.

In addition to avoiding sure loss, the bookmakers also want to entice new customers. There are several techniques that bookmakers can use to persuade customers to bet with their companies. Some bookmakers may offer greater betting odds than others since greater odds means a greater payoff to the customers. Another technique is to offer a ``free coupon", which is a stake that customers can spend on betting. The free coupon can also be viewed as part of a desirable gamble.

However, bookmakers may worry that customers will find a combination of different odds and free coupons that they can bet on and make a guaranteed profit. Therefore, from the bookmaker's perspective, they would like to check whether sets of desirable gambles derived from different odds and free coupons avoid sure loss or not.
Conversely, in theory, a customer may be interested in the case where the bookmaker's set does not avoid sure loss, because then the customer can make a guaranteed profit. In that case, a customer may want to find the combination of bets which results in the best possible sure gain.

There are several studies on exploiting betting odds and free bets in order to find strategies that make a profit. For example,
\citet[Appendix I]{1991:walley} and \citet{2017:Quaeghebeur:Wesseling:Beauxis-Aussalet} study an application of sets of desirable gambles on sports; \citet{2009:Milliner:White:Webber,1998:Schervish:Seidenfeld:Kadane,vlastakisbeating} exploit betting odds directly, whilst \citet{2013:Emiliano} takes free bets into account. Emiliano considers the case of only two possible outcomes, and allows cooperation between customers.
In this paper, we look at any finite number of possible outcomes, but we only consider a single customer.
We evaluate betting odds and free coupons and check whether a set of desirable gambles derived from odds and free coupons avoids sure loss (or not) via the natural extension. If the set does not avoid sure loss, then we show exactly how a customer can incur a sure gain. 

In general, one can check avoiding sure loss by solving a linear programming problem  \cite[p.~151]{1991:walley}. In our previous work \citep{2018:Nakharutai:Troffaes:Caiado}, we provided efficient algorithms for solving these linear programming problems.
For our specific problem, we show that we can calculate the natural extension through the Choquet integral, or through solving a linear programming problem where the optimal value is equal to the natural extension. In the case of not avoiding sure loss, we know that we can find a strategy that the customer can bet on to make a guaranteed gain. We show that this strategy can be identified using the Choquet integral and complementary slackness conditions.
Our method for finding this strategy is generally applicable not just to this betting problem, but to arbitrary problems involving upper probability mass functions. Specifically, by using the Choquet integral and exploiting complementary slackness conditions, we can find optimal solutions of the corresponding pair of dual linear programming programs without directly solving them. 

The paper is organised as follows. \Cref{sec:asl&ne} briefly reviews the main concepts behind desirability, avoiding sure loss and natural extension. We also discuss the Choquet integral which can be used to calculate the natural extension. In \cref{sec:bet}, we introduce fractional fixed odds and explain how betting odds work. As betting odds can be viewed as a set of desirable gambles, we revisit a simple known algorithm to check whether such set avoids sure loss or not.
In \cref{sec:free bet}, we discuss free coupons from the perspective of desirability. We show how we can check whether the problem with free coupons avoids sure loss or not, by means of the natural extension. We demonstrate how we can use the Choquet integral to calculate this natural extension. Next, we exploit complementary slackness to find a combination of bets which makes the best possible guaranteed gain. To illustrate our results, in \cref{sec:Euro-odds}, we consider some actual betting odds and free coupons in the market, and provide an example where a customer can make a sure gain with a free coupon.
\Cref{sec:conclusion} concludes this paper.

\section{Avoiding sure loss and natural extension}\label{sec:asl&ne}

In this section, we will briefly discuss desirability, avoiding sure loss and natural extension.
We will also explain the Choquet integral which can be used to calculate the natural extension in the case considered in this paper. The material in this section will be useful later when we view betting odds and free coupons as a set of desirable gambles and when we want to check whether this set avoids sure loss or not.

\subsection{Avoiding sure loss}\label{sec:asl}
Let $\Omega$ be a finite set of uncertain outcomes. A \emph{gamble} is a bounded real-valued function on $\Omega$. Let $\mathcal{L}(\Omega)$ denote the set of all gambles on $\Omega$. Let $\mathcal{D} $ be a finite set of gambles that a subject deems acceptable; we call $\mathcal{D}$ the subject's \emph{set of desirable gambles}. Rationality conditions for desirability have been proposed as follows \cite[p.~29]{2014:troffaes:decooman::lower:previsions}:

\begin{axiom}[Rationality axioms for desirability]
For every $f$ and $g$ in $\mathcal{L}(\Omega)$ and every non-negative $\alpha \in \mathbb{R}$, we have that:
\begin{itemize}
\item[(D1)] If $f\le 0$ and $f\neq 0$, then $f$ is not desirable.
\item[(D2)] If $f\geq 0$, then $f$ is desirable.
\item[(D3)] If $f$ is desirable, then so is $\alpha f$.
\item[(D4)] If $f$ and $g$ are desirable, then so is $f+g$.
\end{itemize}
\end{axiom}
The first two axioms are trivial as the subject should accept any gamble that he cannot lose from, but he should not accept any gamble that he cannot win from. Axiom (D3) follows the linearity of the utility scale and axiom (D4) shows that a combination of desirable gambles should also be desirable.

We do not assume that any set $\mathcal{D}$, specified by the subject,
satisfies these axioms. However, we can use these axioms to examine
the rationality of $\mathcal{D}$. Indeed,
the rationality axioms essentially state that a non-negative combination of desirable gambles should not produce a sure loss \cite[p.~30]{2014:troffaes:decooman::lower:previsions}.
In that case, we say that $\mathcal{D}$ avoids sure loss.

\begin{definition}\cite[p.~32]{2014:troffaes:decooman::lower:previsions}\label{def:asl}
  A set $\mathcal{D}\subseteq\mathcal{L}(\Omega)$ is said to \emph{avoid sure loss} if for
  all $ n \in \mathbb{N}$, all  $\lambda_{1}, \dots,\lambda_{n} \geq 0$, and all $f_{1}, \dots,f_{n} \in \mathcal{D}$,
\begin{equation}\label{eq:asl}
\max_{\omega \in \Omega}\left(\sum_{i=1}^{n} \lambda_{i}f_{i}(\omega)\right)  \geq 0.
\end{equation}
\end{definition}
Note that the rationality axioms for desirability are stronger than the condition of avoiding sure loss \cite[p.~32]{2014:troffaes:decooman::lower:previsions}.

We can also model uncertainty via acceptable buying (or selling) prices for gambles. A \emph{lower prevision} $\underline{P}$ is a real-valued function defined on some subset of $\mathcal{L}(\Omega)$. We denote the domain of $\underline{P}$ by $\dom\underline{P}$. Given a gamble $f\in\dom\underline{P}$, we interpret $\underline{P}(f)$ as a subject's supremum buying price for $f$, i.e. $f-\alpha$ is deemed desirable for all $\alpha<\underline{P}(f)$ \citep[p.~40]{2014:troffaes:decooman::lower:previsions}.

\begin{definition}\cite[p.~42]{2014:troffaes:decooman::lower:previsions}\label{def:47}
  A lower prevision $\underline{P}$ is said to \emph{avoid sure loss} if for
  all $ n \in \mathbb{N}$, all  $\lambda_{1}, \dots,\lambda_{n} \geq 0$, and all $f_{1}, \dots,f_{n} \in \dom\underline{P}$,
\begin{equation}\label{eq:3.1}
\max_{\omega\in \Omega} \left( \sum_{i=1}^{n} \lambda_{i}\left[f_{i}(\omega)-\underline{P}(f_{i})\right] \right) \geq 0. 
\end{equation}  
\end{definition}

Any lower prevision $\underline{P}$ induces a \emph{conjugate upper prevision} $\overline{P}$ on $-\dom \underline{P}\coloneqq\{-f\colon f\in \dom \underline{P}\}$, defined by $\overline{P}(f)\coloneqq -\underline{P}(-f)$ for all $f\in -\dom\underline{P}$ . $\overline{P}(f)$ represents a subject's infimum selling price for $f$ \cite[p.~41]{2014:troffaes:decooman::lower:previsions}.

Next, let $A$ denote a subset of $\Omega$, also called an \emph{event}. Its associated \emph{indicator} function $I_{A}$ is given  by 
\begin{equation}
\forall \omega \in \Omega\colon I_{A}(\omega) \coloneqq \begin{cases}1 & \text{if }\omega \in A \\ 0 & \text{otherwise}.
\end{cases}
\end{equation}

Further in the paper, we will also extensively use \emph{upper probability mass functions}. An upper probability mass function $\overline{p}$ is a mapping from $\Omega$ to $[0,1]$, and represents the following lower prevision \citep[p.~123]{2014:troffaes:decooman::lower:previsions}:
\begin{equation}\label{eq:lp-upper-prob}
\forall \omega \in \Omega\colon \underline{P}_{\overline{p}}(-I_{\{\omega\}})\coloneqq -\overline{p}(\omega),
\end{equation}
where $\dom \underline{P}_{\overline{p}}= \bigcup_{\omega \in \Omega} \{-I_{\{\omega\}}\}$. 
We can check whether $\underline{P}_{\overline{p}}$ avoids sure loss by \cref{thm:asl:up-prob}.
\begin{theorem}\label{thm:asl:up-prob}
\citep[p.~124]{2014:troffaes:decooman::lower:previsions} $\underline{P}_{\overline{p}}$ avoids sure loss if and only if $\sum_{\omega \in \Omega}\overline{p}(\omega) \geq 1$.
\end{theorem}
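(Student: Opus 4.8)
The plan is to unfold \cref{def:47} directly for the lower prevision $\underline{P}_{\overline{p}}$, whose domain consists solely of the singleton gambles $-I_{\{\omega\}}$ for $\omega\in\Omega$. A general non-negative combination $\sum_{i=1}^{n}\lambda_{i}f_{i}$ with each $f_{i}=-I_{\{\omega_{i}\}}$ can be regrouped by outcome: collecting the coefficients attached to the same $-I_{\{\omega\}}$ lets me assume without loss of generality that the combination is indexed by $\Omega$ itself, with a single coefficient $\lambda_{\omega}\ge 0$ per outcome. Substituting $\underline{P}_{\overline{p}}(-I_{\{\omega\}})=-\overline{p}(\omega)$ and evaluating at a generic outcome $\eta\in\Omega$, the bracketed term becomes $-I_{\{\omega\}}(\eta)+\overline{p}(\omega)$; since $I_{\{\omega\}}(\eta)=1$ exactly when $\omega=\eta$, the whole inner sum collapses to
\begin{equation*}
\sum_{\omega\in\Omega}\lambda_{\omega}\bigl[-I_{\{\omega\}}(\eta)+\overline{p}(\omega)\bigr]=-\lambda_{\eta}+\sum_{\omega\in\Omega}\lambda_{\omega}\overline{p}(\omega).
\end{equation*}

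Next I would take the maximum over $\eta$. As the sum $\sum_{\omega}\lambda_{\omega}\overline{p}(\omega)$ does not depend on $\eta$, maximising $-\lambda_{\eta}$ amounts to replacing $\lambda_{\eta}$ by $\min_{\eta\in\Omega}\lambda_{\eta}$. Hence, for this particular domain, \cref{def:47} reduces to the single requirement that
\begin{equation*}
\sum_{\omega\in\Omega}\lambda_{\omega}\overline{p}(\omega)\ge\min_{\eta\in\Omega}\lambda_{\eta}\qquad\text{for all }\lambda_{\omega}\ge 0.
\end{equation*}

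With this reformulation the two implications are short. For necessity, I would instantiate $\lambda_{\omega}=1$ for every $\omega$, so that $\min_{\eta}\lambda_{\eta}=1$ and the inequality reads precisely $\sum_{\omega}\overline{p}(\omega)\ge 1$. For sufficiency, assuming $\sum_{\omega}\overline{p}(\omega)\ge 1$, I would bound $\lambda_{\omega}\ge\min_{\eta}\lambda_{\eta}$ for each $\omega$ and use $\overline{p}(\omega)\ge 0$ to obtain $\sum_{\omega}\lambda_{\omega}\overline{p}(\omega)\ge(\min_{\eta}\lambda_{\eta})\sum_{\omega}\overline{p}(\omega)\ge\min_{\eta}\lambda_{\eta}$, which is exactly the required inequality.

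The argument carries no serious obstacle; the only step demanding a little care is the reduction to one coefficient per outcome, that is, justifying that repeated gambles in the combination may be merged by adding their coefficients, so that nothing is lost by indexing over $\Omega$ rather than over an arbitrary finite list. Once that bookkeeping is in place, the collapse of the inner sum and the $\min$ computation are purely mechanical.
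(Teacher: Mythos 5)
Your proof is correct, and it takes a genuinely different route from the paper: the paper does not argue the equivalence itself but simply defers to the literature, citing \citep[p.~124, Prop.~7.2]{2014:troffaes:decooman::lower:previsions} with lower probability mass function $\underline{p}=0$ (a result stated there for general pairs $\underline{p},\overline{p}$ of probability bounds), whereas you verify the condition of \cref{def:47} from first principles. Your reduction is sound at every step: merging repeated gambles is legitimate because $\underline{P}_{\overline{p}}$ assigns one value to each gamble and padding with zero coefficients is harmless, so indexing by $\Omega$ loses nothing; the collapse $\sum_{\omega}\lambda_{\omega}\bigl[-I_{\{\omega\}}(\eta)+\overline{p}(\omega)\bigr]=-\lambda_{\eta}+\sum_{\omega}\lambda_{\omega}\overline{p}(\omega)$ and the identity $\max_{\eta}(-\lambda_{\eta})=-\min_{\eta}\lambda_{\eta}$ (valid since $\Omega$ is finite) are exactly right; and the two implications use only $\lambda_{\omega}\equiv 1$ and the bound $\lambda_{\omega}\ge\min_{\eta}\lambda_{\eta}$ together with $\overline{p}\ge 0$ and $\min_{\eta}\lambda_{\eta}\ge 0$. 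What the paper's approach buys is brevity and alignment with the book's more general proposition; what yours buys is self-containedness and insight into the mechanism, since the reduced inequality $\sum_{\omega}\lambda_{\omega}\overline{p}(\omega)\ge\min_{\eta}\lambda_{\eta}$ makes visible why the total upper mass is the only quantity that matters, and it anticipates the dual feasibility constraints $0\le p(\omega)\le\overline{p}(\omega)$, $\sum_{\omega}p(\omega)=1$ that reappear in \linprogref{D} later in the paper.
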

\begin{proof}
See \citep[p.~124, Prop.~7.2]{2014:troffaes:decooman::lower:previsions} with lower probability mass function $\underline{p} = 0$.
\end{proof}
We can interpret an upper probability mass function as providing an upper bound on the probability of each $\{\omega\}$, for all $\omega\in\Omega$ \citep[p.~123]{2014:troffaes:decooman::lower:previsions}.

\subsection{Natural extension}\label{natural}
The natural extension of a set of desirable gambles $\mathcal{D}$ is defined as the smallest set of gambles which includes all finite non-negative combinations of gambles in $\mathcal{D}$ and all non-negative gambles \cite[\S~3.7]{2014:troffaes:decooman::lower:previsions}:

\begin{definition}\cite[p.~32]{2014:troffaes:decooman::lower:previsions}\label{def:ne}
The \emph{natural extension} of a set $\mathcal{D}\subseteq\mathcal{L}(\Omega)$ is:
\begin{equation}\label{eq:ne}
\mathcal{E}_{\mathcal{D}}\coloneqq\left\lbrace g_{0} + \sum_{i=1}^{n} \lambda_{i}g_{i} \colon g_{0} \geq 0,\,n \in \mathbb{N},\, g_1,\dots,g_n \in \mathcal{D},\,\lambda_1,\dots,\lambda_n \geq 0 \right\rbrace.
\end{equation}
\end{definition}

From this natural extension, we can derive a supremum buying price for any gamble $f$.
\begin{definition}\cite[p.~46]{2014:troffaes:decooman::lower:previsions}\label{def:ne-d}
For any set $\mathcal{D}\subseteq\mathcal{L}(\Omega)$ and $f\in\mathcal{L}(\Omega)$, we define:
\begin{align}\label{eq:6.11}
\underline{E}_{\mathcal{D}}(f) & \coloneqq \sup \left\lbrace\alpha \in \mathbb{R}\colon f-\alpha \in \mathcal{E}_{\mathcal{D}}\right\rbrace\\
& =  \sup \left\lbrace\alpha \in \mathbb{R}\colon f -\alpha \geq \sum_{i=1}^{n} \lambda_{i}f_{i}, n \in \mathbb{N}, f_{i} \in \mathcal{D}, \lambda_{i} \geq 0\right\rbrace.
\end{align}
\end{definition}
Note that $\underline{E}_{\mathcal{D}}$ is finite, and hence, is a lower prevision, if and only if $\mathcal{D}$ avoids sure loss
\cite[p.~68]{2014:troffaes:decooman::lower:previsions}.

We denote the conjugate of $\underline{E}_{\mathcal{D}}$ by $\overline{E}_{\mathcal{D}}$ which is defined by
\begin{equation}
  \overline{E}_{\mathcal{D}}(f)\coloneqq -\underline{E}_{\mathcal{D}}(-f)
  =
  \inf \left\lbrace\beta \in \mathbb{R}\colon \beta - f \geq \sum_{i=1}^{n} \lambda_{i}f_{i}, n \in \mathbb{N}, f_{i} \in \mathcal{D}, \lambda_{i} \geq 0\right\rbrace.
\end{equation}
for all $f$ in $\mathcal{L}(\Omega)$ \citep[p.~124]{1991:walley}.
$\underline{E}_{\mathcal{D}}$ is simply denoted by $\underline{E}$ when there is no confusion.

Given a lower prevision $\underline{P}$, we can derive a set of desirable gambles corresponding to $\underline{P}$ as follows \citep[p.~42]{2014:troffaes:decooman::lower:previsions}:
\begin{equation}\label{eq:5.10}
\mathcal{D}_{\underline{P}} \coloneqq \left\{ g-\mu\colon g \in \dom\underline{P} \text{ and } \mu < \underline{P}(g) \right\}.
\end{equation} 
Combining \cref{def:ne-d} and \cref{eq:5.10} together, we can define the natural extension of $\underline{P}$:
\begin{definition}\cite[p.~47]{2014:troffaes:decooman::lower:previsions}\label{def:E_P}
Let $\underline{P}$ be a lower prevision. The natural extension of $\underline{P}$ is defined for all $f \in \mathcal{L}(\Omega)$ by:
\begin{multline}
\underline{E}_{\underline{P}}(f) \coloneqq \underline{E}_{\mathcal{D}_{\underline{P}}}(f)
\\
=  \sup \left\lbrace\alpha \in \mathbb{R}\colon f -\alpha \geq \sum_{i=1}^{n} \lambda_{i}(f_{i}-\underline{P}(f_{i})) , n\in\mathbb{N},\,f_{i} \in  \dom\underline{P},\,\lambda_{i} \geq 0\right\rbrace. 
\end{multline}
\end{definition} 
Similarly, $\underline{E}_{\underline{P}}$ is finite if and only if $\underline{P}$ avoids sure loss \cite[p.~68]{2014:troffaes:decooman::lower:previsions}.

In the next section, we briefly explain the use of the Choquet integral to calculate the natural extension for the type of lower previsions considered in this paper; see \citep{2014:troffaes:itip:computation,2014:troffaes:decooman::lower:previsions} for more detail.

\subsection{Upper probability mass functions and Choquet integration}\label{Choq-int}

Let $\underline{E}_{\overline{p}}$ be the natural extension of  $\underline{P}_{\overline{p}}$ that avoids sure loss.
Then $\underline{E}_{\overline{p}}$ is 2-monotone and can be computed via the Choquet integral \citep[p.~125]{2014:troffaes:decooman::lower:previsions}.
In this section, based on the results from \citep[Sec.~7.1]{2014:troffaes:decooman::lower:previsions}, we give a closed form expression for this integral.

For simplicity, we denote the natural extension $\underline{E}_{\overline{p}}(I_{A})$ of an indicator $I_{A}$ as $\underline{E}_{\overline{p}}(A)$.  We can use the following theorem to calculate $\underline{E}_{\overline{p}}(A)$. 
\begin{theorem}\label{thm:ne:up-prob}
\citep[p.~125]{2014:troffaes:decooman::lower:previsions} Let $\underline{P}_{\overline{p}}$ avoid sure loss. Then for all $ A \subseteq \Omega$, 
\begin{equation}\label{eq:ne-upper-prob}
\underline{E}_{\overline{p}}(A) = \max\{0, 1-U(A^{\mathsf{c}})\} \qquad \text{and} \qquad \overline{E}_{\overline{p}}(A) = \min \{U(A), 1\},
\end{equation}
where $U(A)\coloneqq\sum_{\omega \in A}\overline{p}(\omega).$ 
\end{theorem}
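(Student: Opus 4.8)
The plan is to reduce the statement to a pair of elementary linear optimisations over the credal set associated with $\overline{p}$, and then to solve those by a mass-shifting argument in which the hypothesis that $\underline{P}_{\overline{p}}$ avoids sure loss plays the decisive role. The formula for $\overline{E}_{\overline{p}}$ will be proved directly, and the one for $\underline{E}_{\overline{p}}$ will follow by conjugacy.

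First I would unfold \cref{def:E_P}. Since $\dom\underline{P}_{\overline{p}}=\{-I_{\{\omega\}}\colon\omega\in\Omega\}$ and $\underline{P}_{\overline{p}}(-I_{\{\omega\}})=-\overline{p}(\omega)$, every term $f_i-\underline{P}_{\overline{p}}(f_i)$ in the definition has the form $\overline{p}(\omega)-I_{\{\omega\}}$, so after collecting coefficients a generic non-negative combination is $\sum_{\omega\in\Omega}\mu_\omega\bigl(\overline{p}(\omega)-I_{\{\omega\}}\bigr)$ with $\mu_\omega\ge 0$. Because $\underline{P}_{\overline{p}}$ avoids sure loss by hypothesis (equivalently $\sum_{\omega}\overline{p}(\omega)\ge 1$, by \cref{thm:asl:up-prob}), its natural extension is a coherent lower prevision, and by the standard lower envelope characterisation it coincides with the lower envelope of the credal set
\begin{equation*}
\mathcal{M}\coloneqq\Bigl\{\,p\colon p(\omega)\ge 0,\ \textstyle\sum_{\omega}p(\omega)=1,\ \text{and }p(\omega)\le\overline{p}(\omega)\text{ for all }\omega\in\Omega\,\Bigr\},
\end{equation*}
where the cap $p(\omega)\le\overline{p}(\omega)$ is merely the dominance condition $P(-I_{\{\omega\}})\ge\underline{P}_{\overline{p}}(-I_{\{\omega\}})$ rewritten. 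Consequently $\underline{E}_{\overline{p}}(A)=\min_{p\in\mathcal{M}}\sum_{\omega\in A}p(\omega)$ and $\overline{E}_{\overline{p}}(A)=\max_{p\in\mathcal{M}}\sum_{\omega\in A}p(\omega)$. (One could instead stay strictly inside \cref{def:E_P} and argue about the coefficients $\mu_\omega$ directly, but the two routes carry the same content via linear-programming duality, and the credal formulation is cleaner.)

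Next I would solve the maximisation for $\overline{E}_{\overline{p}}(A)$. Any $p\in\mathcal{M}$ satisfies both $\sum_{\omega\in A}p(\omega)\le\sum_{\omega\in A}\overline{p}(\omega)=U(A)$ and $\sum_{\omega\in A}p(\omega)\le\sum_{\omega\in\Omega}p(\omega)=1$, so $\overline{E}_{\overline{p}}(A)\le\min\{U(A),1\}$. For the reverse inequality I would exhibit a $p\in\mathcal{M}$ attaining this bound: if $U(A)\ge 1$, place total mass $1$ on $A$ while respecting the caps $\overline{p}(\omega)$, which is possible precisely because $U(A)\ge 1$; if $U(A)<1$, set $p(\omega)=\overline{p}(\omega)$ on $A$ and spread the residual mass $1-U(A)$ over $A^{\mathsf{c}}$ within the caps. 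The latter is feasible exactly because $U(A^{\mathsf{c}})\ge 1-U(A)$, which is nothing but the avoiding-sure-loss inequality $\sum_{\omega}\overline{p}(\omega)\ge 1$. This yields $\overline{E}_{\overline{p}}(A)=\min\{U(A),1\}$.

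Finally, the formula for $\underline{E}_{\overline{p}}$ follows by conjugacy without a second optimisation: writing $I_A=1-I_{A^{\mathsf{c}}}$ and using constant additivity of the coherent $\underline{E}_{\overline{p}}$ together with $\overline{E}_{\overline{p}}(g)=-\underline{E}_{\overline{p}}(-g)$ gives $\underline{E}_{\overline{p}}(A)=1-\overline{E}_{\overline{p}}(A^{\mathsf{c}})=1-\min\{U(A^{\mathsf{c}}),1\}=\max\{0,1-U(A^{\mathsf{c}})\}$; alternatively one repeats the mass-shifting argument, this time loading $A^{\mathsf{c}}$ as heavily as the caps allow so as to minimise the mass on $A$. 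I expect the main obstacle to be the attainment step: one must verify that the extremal mass function actually lies in $\mathcal{M}$, and this is exactly the point where the hypothesis $\sum_{\omega}\overline{p}(\omega)\ge 1$ is indispensable, since otherwise $\mathcal{M}$ would be empty and both the minimum and the maximum ill-defined.
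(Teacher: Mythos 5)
Your proposal is correct. Note that the paper itself gives no in-line argument for \cref{thm:ne:up-prob}: its ``proof'' is a citation to \citep[p.~125]{2014:troffaes:decooman::lower:previsions} with $\underline{p}=0$, and your blind reconstruction is essentially the standard argument behind that citation. Your two ingredients are sound: (i) since $\Omega$ is finite, the passage from \cref{def:E_P} to the envelope formula $\overline{E}_{\overline{p}}(A)=\max_{p\in\mathcal{M}}\sum_{\omega\in A}p(\omega)$ is just linear-programming duality (or the lower envelope theorem), and the dominance constraints $P(-I_{\{\omega\}})\ge-\overline{p}(\omega)$ do rewrite exactly as the caps $p(\omega)\le\overline{p}(\omega)$; (ii) your attainment step correctly isolates where avoiding sure loss enters, since feasibility of spreading the residual mass $1-U(A)$ over $A^{\mathsf{c}}$ is precisely the inequality $\sum_{\omega}\overline{p}(\omega)\ge1$ from \cref{thm:asl:up-prob}, and the conjugacy step $\underline{E}_{\overline{p}}(A)=1-\overline{E}_{\overline{p}}(A^{\mathsf{c}})$ via constant additivity is legitimate because the natural extension of a lower prevision avoiding sure loss is coherent. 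It is worth remarking that your greedy extremal mass function (fill the caps on $A$, then top up elsewhere) is exactly the construction the paper later formalises in \cref{alg:construct_p} and \cref{eq:p(omega)}, and your \cref{thm:ne:up-prob} argument is the special case of \cref{thm:upE(f)=E(f)} for indicator gambles; so your proof not only fills the gap the paper delegates to the reference, but does so in a way consistent with the paper's own machinery.
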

\begin{proof}
See \citep[p.~125]{2014:troffaes:decooman::lower:previsions} with lower probability mass function $\underline{p} = 0$.
\end{proof}

\begin{theorem}\label{thm:decompose}
Let $f$ be decomposed in terms of its level sets $A_i$, $i=0,1,\dots, n$:
\begin{equation}\label{eq:decompose-f}
f = \sum_{i=0}^{n} \lambda_{i} I_{A_{i}}
\end{equation}
where $\lambda_{0} \in \mathbb{R}$, $\lambda_{1},\dots,\lambda_{n} > 0$ and $\Omega = A_{0} \supsetneq A_{1} \supsetneq \dots \supsetneq A_{n}\neq \emptyset$. Then
\begin{equation}\label{eq:ne:gam}
\underline{E}_{\overline{p}}(f) =  \sum_{i=0}^{n} \lambda_{i}\underline{E}_{\overline{p}}(A_{i}).
\end{equation} 
\end{theorem}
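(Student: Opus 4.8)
The plan is to reduce the statement to the known closed form of the Choquet integral for simple gambles. As recalled at the start of \cref{Choq-int}, because $\underline{P}_{\overline{p}}$ avoids sure loss, its natural extension $\underline{E}_{\overline{p}}$ is $2$-monotone and coincides with the Choquet integral of $f$ taken with respect to the set function $A\mapsto\underline{E}_{\overline{p}}(A)$ on events. Hence it suffices to evaluate that Choquet integral on a gamble presented in the level-set form \eqref{eq:decompose-f}, and to match the resulting expression against \eqref{eq:ne:gam}.

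First I would rewrite $f$ in terms of its distinct values. Setting $c_i\coloneqq\sum_{j=0}^{i}\lambda_j$ and $A_{n+1}\coloneqq\emptyset$, the chain $\Omega=A_0\supsetneq A_1\supsetneq\dots\supsetneq A_n\neq\emptyset$ shows that every $\omega\in A_i\setminus A_{i+1}$ lies in $A_0,\dots,A_i$ but in none of $A_{i+1},\dots,A_n$, so that $f$ takes the constant value $c_i$ on $A_i\setminus A_{i+1}$. Because $\lambda_1,\dots,\lambda_n>0$, these values are strictly increasing, $c_0<c_1<\dots<c_n$, so the $c_i$ are precisely the distinct values of $f$, and the upper level sets satisfy $\{\omega\colon f(\omega)\geq c_i\}=A_i$ for each $i$. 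This identification is the step that actually uses both hypotheses of the theorem.

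Next I would substitute into the standard expression of the Choquet integral of a simple gamble through its upper level sets,
\begin{equation*}
\underline{E}_{\overline{p}}(f) = c_0\,\underline{E}_{\overline{p}}(A_0) + \sum_{i=1}^{n}(c_i-c_{i-1})\,\underline{E}_{\overline{p}}(A_i),
\end{equation*}
and simplify using $c_0=\lambda_0$ together with $c_i-c_{i-1}=\lambda_i$ for $i\geq1$, which collapses the right-hand side to $\sum_{i=0}^{n}\lambda_i\underline{E}_{\overline{p}}(A_i)$, exactly \eqref{eq:ne:gam}. Equivalently, one can avoid the explicit formula and argue by comonotonicity: since the level sets are nested, the indicators $I_{A_i}$ are pairwise comonotone, so $f$ is a comonotone sum of the constant $\lambda_0 I_{A_0}$ and the nonnegative multiples $\lambda_i I_{A_i}$; the comonotone additivity and positive homogeneity of the Choquet integral then distribute $\underline{E}_{\overline{p}}$ across the sum, each term contributing $\lambda_i\underline{E}_{\overline{p}}(A_i)$.

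The only genuinely delicate point is the bookkeeping in the second paragraph: one must verify that strict nesting together with $\lambda_1,\dots,\lambda_n>0$ forces the $A_i$ to coincide with the upper level sets of $f$, so that the Choquet representation applies verbatim and no two layers collapse; once this is in place the computation is purely mechanical. The background appeal to the Choquet-integral representation of $\underline{E}_{\overline{p}}$ itself is justified by $2$-monotonicity, which rests on $\underline{P}_{\overline{p}}$ avoiding sure loss and on \cref{thm:ne:up-prob} for the values $\underline{E}_{\overline{p}}(A_i)$ that appear on the right-hand side.
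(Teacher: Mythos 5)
Your proof is correct and takes essentially the same route as the paper: the paper's proof simply cites that the right-hand side of \eqref{eq:ne:gam} is the Choquet integral in closed form and that $\underline{E}_{\overline{p}}(f)$ equals this Choquet integral (Prop.~7.3(ii) of the cited book, resting on $2$-monotonicity under avoiding sure loss), so your level-set bookkeeping merely spells out what the paper delegates to the cited closed-form expression. Your alternative argument via comonotone additivity and constant additivity is also sound and in fact mirrors the technique the paper uses in \cref{app:A} to prove \cref{cor:decompose}.
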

\begin{proof}
The right hand side is the Choquet integral \citep[p.~379, Eq.~(C.8)]{2014:troffaes:decooman::lower:previsions} and the natural extension $\underline{E}_{\overline{p}}(f)$ is equal to the Choquet integral \citep[p.~125, Prop.~7.3(ii)]{2014:troffaes:decooman::lower:previsions} (with lower probability mass function $\underline{p}=0$).
\end{proof}
Note that \cref{thm:decompose} also holds for the upper natural extension.
\begin{corollary}\label{cor:decompose}
Let $f$ be a gamble decomposed as in \cref{eq:decompose-f}. Then 
\begin{equation}\label{eq:cor:upE(f)}
\overline{E}_{\overline{p}}(f) =  \sum_{i=0}^{n} \lambda_{i}\overline{E}_{\overline{p}}(A_{i}).
\end{equation}
\end{corollary}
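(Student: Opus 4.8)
The plan is to reduce everything to \cref{thm:decompose} through conjugacy. By definition of the conjugate upper natural extension we have $\overline{E}_{\overline{p}}(f) = -\underline{E}_{\overline{p}}(-f)$, so it suffices to evaluate $\underline{E}_{\overline{p}}(-f)$ and negate. The first thing to notice is that $-f$ cannot be fed into \cref{thm:decompose} as written: negating $f$ turns the positive increments $\lambda_{1},\dots,\lambda_{n}$ into negative ones and reverses the ordering of the level sets. This is the main obstacle, and overcoming it amounts to rewriting $-f$ as a genuine level-set decomposition of the form demanded by \cref{thm:decompose}, namely a real leading coefficient, strictly positive remaining coefficients, and a strictly decreasing chain of events.

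To do this I would pass to the complementary chain. Since $f$ is constant on each difference $A_{j}\setminus A_{j+1}$ and increases as one moves inward along $A_{0}\supsetneq\dots\supsetneq A_{n}$, the function $-f$ decreases inward, so its superlevel sets are precisely the complements $A_{i}^{\mathsf{c}}$. A short computation of the jumps then gives
\[
-f = -\Bigl(\sum_{i=0}^{n}\lambda_{i}\Bigr)I_{\Omega} + \sum_{k=1}^{n}\lambda_{n+1-k}\,I_{A_{n+1-k}^{\mathsf{c}}},
\]
in which the chain $\Omega \supsetneq A_{n}^{\mathsf{c}} \supsetneq \dots \supsetneq A_{1}^{\mathsf{c}}$ is strictly decreasing (using $A_{0}=\Omega\supsetneq A_{1}$ and $A_{n}\neq\emptyset$), the leading coefficient is real, and every coefficient $\lambda_{n+1-k}$ is strictly positive. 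This is exactly the shape required by \cref{thm:decompose}.

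Applying \cref{thm:decompose} to this decomposition, using $\underline{E}_{\overline{p}}(\Omega)=1$, and reindexing by $i=n+1-k$ yields
\[
\underline{E}_{\overline{p}}(-f) = -\sum_{i=0}^{n}\lambda_{i} + \sum_{i=1}^{n}\lambda_{i}\,\underline{E}_{\overline{p}}(A_{i}^{\mathsf{c}}).
\]
Negating and regrouping gives $\overline{E}_{\overline{p}}(f) = \lambda_{0} + \sum_{i=1}^{n}\lambda_{i}\bigl(1-\underline{E}_{\overline{p}}(A_{i}^{\mathsf{c}})\bigr)$. The final step is to convert the complementary values back: by \cref{thm:ne:up-prob}, $\underline{E}_{\overline{p}}(A_{i}^{\mathsf{c}})=\max\{0,1-U(A_{i})\}$, hence $1-\underline{E}_{\overline{p}}(A_{i}^{\mathsf{c}})=\min\{1,U(A_{i})\}=\overline{E}_{\overline{p}}(A_{i})$; similarly $\lambda_{0}=\lambda_{0}\overline{E}_{\overline{p}}(\Omega)=\lambda_{0}\overline{E}_{\overline{p}}(A_{0})$, since $U(\Omega)\geq 1$ by \cref{thm:asl:up-prob}. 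Substituting collapses the expression to $\sum_{i=0}^{n}\lambda_{i}\overline{E}_{\overline{p}}(A_{i})$, as claimed.

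Conceptually, the identity being proved is just the comonotone-additive formula for the Choquet integral of $f$ against the $2$-alternating conjugate set function $\overline{E}_{\overline{p}}$, and the whole argument is an instance of the conjugacy relating the Choquet integrals of a capacity and its conjugate. The only delicate point is the bookkeeping in the re-decomposition of $-f$ over the complementary chain and the event-level conjugacy $1-\underline{E}_{\overline{p}}(A^{\mathsf{c}})=\overline{E}_{\overline{p}}(A)$; once these are in place the result is immediate from \cref{thm:decompose}.
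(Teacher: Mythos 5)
Your proof is correct, and it shares its overall skeleton with the paper's proof in \cref{app:A}: both pass to the conjugate via $\overline{E}_{\overline{p}}(f)=-\underline{E}_{\overline{p}}(-f)$, rewrite $-f$ over the complementary chain using $I_{A_{i}}=1-I_{A_{i}^{\mathsf{c}}}$, and conclude with the event-level conjugacy $1-\underline{E}_{\overline{p}}(A^{\mathsf{c}})=\overline{E}_{\overline{p}}(A)$. The genuine difference lies in how the one nontrivial step, $\underline{E}_{\overline{p}}(-f)=-\sum_{i=0}^{n}\lambda_{i}+\sum_{i=1}^{n}\lambda_{i}\underline{E}_{\overline{p}}(A_{i}^{\mathsf{c}})$, is justified: the paper cites constant additivity and comonotone additivity of the Choquet integral directly from the reference book (Prop.~C.5(v)\&(vii), together with the identification of $\underline{E}_{\overline{p}}$ with that integral), whereas you avoid any external citation by checking that your rewriting of $-f$ is itself a legitimate decomposition of the form \cref{eq:decompose-f} — a strictly decreasing chain $\Omega\supsetneq A_{n}^{\mathsf{c}}\supsetneq\dots\supsetneq A_{1}^{\mathsf{c}}$, with $\Omega\supsetneq A_{n}^{\mathsf{c}}$ from $A_{n}\neq\emptyset$ and $A_{1}^{\mathsf{c}}\neq\emptyset$ from $A_{0}\supsetneq A_{1}$ — and then re-applying \cref{thm:decompose}. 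What your route buys is self-containedness: it uses only results stated in the paper (\cref{thm:decompose,thm:ne:up-prob,thm:asl:up-prob}), at the modest cost of the reindexing bookkeeping, and it makes explicit two points the paper leaves implicit, namely that $1-\underline{E}_{\overline{p}}(A_{i}^{\mathsf{c}})=\min\{1,U(A_{i})\}=\overline{E}_{\overline{p}}(A_{i})$ by \cref{thm:ne:up-prob}, and that the $i=0$ term is recovered because $\overline{E}_{\overline{p}}(A_{0})=1$, which requires $U(\Omega)\geq 1$, i.e.\ avoiding sure loss (\cref{thm:asl:up-prob}). Conceptually both proofs rest on the same comonotone additivity; yours simply routes through the statement of \cref{thm:decompose} rather than through the Choquet-integral properties underlying its proof.
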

\begin{proof}
See \cref{app:A}.
\end{proof}
The Choquet integral will be useful when we want to calculate the natural extension later in \cref{sec:free bet}.

\subsection{Avoiding sure loss with one extra gamble}\label{extra-gam}
Let $\mathcal{D}=\{g_1,\dots,g_n\}$ be a set of desirable gambles that avoids sure loss and let $f$ be another desirable gamble. We want to check whether $ \mathcal{D} \cup \{f\}$ still avoids sure loss or not. This idea will be used when we want to check avoiding sure loss with a free coupon in \cref{sec:free bet}.

By the condition of avoiding sure loss in \cref{def:asl}, $\mathcal{D}\cup \{f\}$ avoids sure loss if and only if for all $\lambda_{0} \geq 0$, $ n \in \mathbb{N}$, $g_{i} \in \mathcal{D}$ and $\lambda_{1}, \dots,\lambda_{n} \geq 0$,
\begin{equation}\label{eq:asl:condi}
\max_{\omega\in \Omega} \left(\sum_{i=1}^{n} \lambda_{i}g_{i}(\omega) +\lambda_{0}f(\omega) \right) \geq 0.
\end{equation}
We can simplify \cref{eq:asl:condi} as follows.
\begin{lemma}\label{lem:asl:condi}
Let $\Omega$ be a finite set, $\mathcal{D}= \{g_{1}, \dots,g_{n}\}$ be a set of desirable gambles that avoids sure loss and $f$ be another desirable gamble. Then, $\mathcal{D}\cup \{f\}$ avoids sure loss if and only if for all $n \in \mathbb{N}$, $g_{i} \in \mathcal{D}$ and $\lambda_{1}, \dots,\lambda_{n} \geq 0$,
\begin{equation}\label{eq:asl:condi:2}
\max_{\omega\in \Omega} \left(\sum_{i=1}^{n} \lambda_{i}g_{i}(\omega) +f(\omega) \right) \geq 0.
\end{equation}
\end{lemma}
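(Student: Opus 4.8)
The plan is to show the two conditions are equivalent by starting from the characterisation already recorded in \cref{eq:asl:condi}, which says that $\mathcal{D}\cup\{f\}$ avoids sure loss exactly when the displayed maximum is non-negative for every choice of the coefficient $\lambda_0\ge 0$ on $f$ together with all $\lambda_1,\dots,\lambda_n\ge 0$. The content of the lemma is then just that one may, without loss of generality, fix $\lambda_0=1$. The forward (``only if'') direction is immediate: if \cref{eq:asl:condi} holds for every $\lambda_0\ge 0$, then specialising to $\lambda_0=1$ yields exactly \cref{eq:asl:condi:2}.

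For the reverse (``if'') direction I would assume \cref{eq:asl:condi:2} and verify \cref{eq:asl:condi} for an arbitrary $\lambda_0\ge 0$, splitting into two cases according to whether $\lambda_0$ is zero or strictly positive. When $\lambda_0=0$ the expression to be bounded reduces to $\max_{\omega\in\Omega}\sum_{i=1}^n\lambda_i g_i(\omega)$, which is non-negative precisely because $\mathcal{D}$ avoids sure loss by hypothesis; this is the step where the assumption on $\mathcal{D}$ is used. When $\lambda_0>0$ I would exploit the positive homogeneity of the maximum: since scaling by a positive constant commutes with $\max$, one has
\begin{equation}
\max_{\omega\in\Omega}\left(\sum_{i=1}^n\lambda_i g_i(\omega)+\lambda_0 f(\omega)\right)
=\lambda_0\max_{\omega\in\Omega}\left(\sum_{i=1}^n\frac{\lambda_i}{\lambda_0} g_i(\omega)+f(\omega)\right).
\end{equation}
Writing $\mu_i\coloneqq\lambda_i/\lambda_0\ge 0$, the bracketed maximum on the right is of the form appearing in \cref{eq:asl:condi:2} and is therefore non-negative; multiplying by the positive factor $\lambda_0$ preserves the sign, so the left-hand side is non-negative as required.

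The only point that needs care — and the closest thing to an obstacle — is that the rescaling argument is valid only for $\lambda_0>0$, which is exactly why the case $\lambda_0=0$ must be isolated and disposed of separately using the hypothesis that $\mathcal{D}$ itself avoids sure loss. Once these two cases are combined, \cref{eq:asl:condi} holds for every $\lambda_0\ge 0$, establishing the equivalence and completing the proof.
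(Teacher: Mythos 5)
Your proposal is correct and follows essentially the same route as the paper's proof: the case $\lambda_0=0$ is handled by the hypothesis that $\mathcal{D}$ avoids sure loss, and the case $\lambda_0>0$ by dividing through (positive homogeneity of the maximum) to reduce to \cref{eq:asl:condi:2}. You merely spell out the trivial ``only if'' specialisation $\lambda_0=1$ explicitly, which the paper leaves implicit.
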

\begin{proof}
If $\lambda_0 = 0$ in \cref{eq:asl:condi}, then \cref{eq:asl:condi} is trivially satisfied because $\mathcal{D}$ avoids sure loss.
Otherwise $\lambda_0 > 0$, and for all $i$, $\lambda_i \geq 0$, so $\lambda_i/\lambda_0 \geq 0$. Therefore \cref{eq:asl:condi} is equivalent to
\begin{equation}\label{eq:asl:condi:3}
\max_{\omega\in \Omega} \left(\sum_{i=1}^{n} \left(\frac{\lambda_i}{\lambda_0}\right) g_{i}(\omega) +f(\omega) \right) \geq 0.
\end{equation}
Therefore, $\mathcal{D}\cup \{f\}$ avoids sure loss if and only if
 \cref{eq:asl:condi:2} holds. 
\end{proof}

Next, we give a method not only for checking avoiding sure loss of $\mathcal{D}\cup \{f\}$, but also for bounding the worst case loss, which will be useful later in \cref{sec:free bet}.

\begin{theorem}\label{thm:check-asl}
Let $ f \in \mathcal{L}(\Omega)$ and let $\mathcal{D} = \{g_{1}, \dots,g_{n}\}$ be a set of desirable gambles that avoids sure loss. Then, $\mathcal{D}\cup \{f\}$ avoids sure loss if and only if $\overline{E}_{\mathcal{D}}(f) \geq 0$. If $\mathcal{D}\cup \{f\}$ does not avoid sure loss, then there exist $\lambda_1 \geq 0$, \dots, $\lambda_n\ge 0$ such that $f+ \sum_{i=1}^{n} \lambda_{i}g_{i}$, which is a  combination of desirable gambles, results in a loss at least $|\overline{E}_{\mathcal{D}}(f)|$.
\end{theorem}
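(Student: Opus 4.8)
The plan is to reduce both claims to the explicit formula
\[
\overline{E}_{\mathcal{D}}(f) = \inf_{\lambda_1,\dots,\lambda_n \geq 0} \max_{\omega \in \Omega}\left(f(\omega) + \sum_{i=1}^{n} \lambda_i g_i(\omega)\right),
\]
which I would first derive directly from the definition of $\overline{E}_{\mathcal{D}}$. For any fixed nonnegative $\lambda_1,\dots,\lambda_n$, the constraint $\beta - f \geq \sum_i \lambda_i g_i$ is exactly $\beta \geq f(\omega) + \sum_i \lambda_i g_i(\omega)$ for every $\omega$, so the smallest admissible $\beta$ is the inner maximum; taking the infimum over $\lambda_i \geq 0$ (and noting that the cone generated by the finite set $\mathcal{D}$ is parametrised by these $n$ coefficients) yields the displayed identity.

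Given this identity, the equivalence is immediate from \cref{lem:asl:condi}: that lemma says $\mathcal{D}\cup\{f\}$ avoids sure loss if and only if $\max_{\omega}\left(\sum_i \lambda_i g_i(\omega) + f(\omega)\right) \geq 0$ for all $\lambda_i \geq 0$, and ``every such maximum is nonnegative'' is precisely ``the infimum of these maxima is nonnegative'', i.e.\ $\overline{E}_{\mathcal{D}}(f) \geq 0$.

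For the quantitative second part, suppose $\mathcal{D}\cup\{f\}$ does not avoid sure loss, so $\overline{E}_{\mathcal{D}}(f) < 0$. The key step is to show that the infimum above is actually \emph{attained} at some $\lambda^* \geq 0$. I would phrase the optimisation as the linear program of minimising $\beta$ subject to $\beta \geq f(\omega) + \sum_i \lambda_i g_i(\omega)$ for all $\omega \in \Omega$ and $\lambda_i \geq 0$. This program is feasible (take all $\lambda_i = 0$ and $\beta = \max_\omega f(\omega)$) and its optimal value equals $\overline{E}_{\mathcal{D}}(f)$, which is finite because $\mathcal{D}$ avoids sure loss. Since $\Omega$ is finite, this is a finite-dimensional linear program with finitely many constraints and a finite optimum, so by the fundamental theorem of linear programming the optimum is attained at some feasible $(\beta^*, \lambda^*)$ with $\beta^* = \overline{E}_{\mathcal{D}}(f)$.

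With this minimiser in hand, the conclusion follows: for the gamble $h \coloneqq f + \sum_i \lambda_i^* g_i$ we have $h(\omega) = f(\omega) + \sum_i \lambda_i^* g_i(\omega) \leq \beta^* = \overline{E}_{\mathcal{D}}(f) = -|\overline{E}_{\mathcal{D}}(f)|$ for every $\omega$, using $\overline{E}_{\mathcal{D}}(f) < 0$. Because the coefficient on $f$ is $1$ and each $\lambda_i^* \geq 0$, $h$ is a nonnegative combination of gambles from $\mathcal{D}\cup\{f\}$, i.e.\ a combination of desirable gambles, and it guarantees a loss of at least $|\overline{E}_{\mathcal{D}}(f)|$ in every outcome. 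The one delicate point, and the place I would expect to spend the most care, is the attainment of the infimum: the definition of $\overline{E}_{\mathcal{D}}$ only supplies an infimum, and the whole strength of the second statement rests on turning it into a minimum, which is exactly what the finiteness of the linear program's optimum together with the finiteness of $\Omega$ provides.
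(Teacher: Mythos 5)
Your proof is correct and takes essentially the same route as the paper's own proof in \cref{app:B}: the same reformulation of $\overline{E}_{\mathcal{D}}(f)$ as an infimum over $\lambda_1,\dots,\lambda_n\ge 0$ of $\max_{\omega\in\Omega}\left(f(\omega)+\sum_{i=1}^n\lambda_i g_i(\omega)\right)$, the same appeal to \cref{lem:asl:condi} for the equivalence, and the same attainment-of-the-minimum step to extract the combination incurring a loss of at least $|\overline{E}_{\mathcal{D}}(f)|$. If anything, your justification of attainment via the fundamental theorem of linear programming is more explicit than the paper's terse remark that the infimum is a minimum because $\mathcal{D}$ is finite.
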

\begin{proof}
See \cref{app:B}.
\end{proof}
Note that by \cref{def:E_P}, \cref{thm:check-asl} can also be applied to $\overline{E}_{\underline{P}}$.

\section{Betting scheme}\label{sec:bet}

In this section, we explain how fractional betting odds work and look at two scenarios: (i) a customer bets against a bookmaker and (ii) a customer
bets against multiple bookmakers. In both cases, we view betting odds as a set of desirable gambles and check whether such a set avoids sure loss or not.

\subsection{Betting with one bookmaker}\label{one bookie}

In the UK, a bookmaker usually offers fixed fractional odds on possible outcomes of an event that customers are interested in. For example, in the European Football Championship 2016, customers are interested in the winner of the championship. Suppose that a bookmaker sets odds on France, say $9/2$, and one customer accepts this odds. For every stake \pounds 2 that the customer bets on France, he will win \pounds 9 plus the return of his stake. So the bookmaker will lose \pounds 9 in total. Otherwise, the bookmaker will pay nothing and keep \pounds 2. The bookmaker often writes $a/1$ as $a$.

Given fractional odds $a/b$, a customer can simply calculate his return as follows. For every amount $b$ that the customer bets, he will either get nothing (in case the bet is lost), or gain $a$ plus the return of his stake (in case the bet is won). As the bookmaker accepts this transaction, the total payoff can be seen as a desirable gamble, say $g$, to the bookmaker:
\begin{equation}\label{eq:odds-gam1}
g(\omega) = \begin{cases}-a \qquad \text{if }\omega = x \\ \ b \qquad \text{ otherwise}.
\end{cases}
\end{equation}
Note that $-g$ is a desirable gamble to the customer, should the customer decide to accept the bookmaker's odds.

Let $\Omega = \{\omega_1,\dots,\omega_n\}$ be a finite set of outcomes. Suppose that for each $i$, the bookmaker sets betting odds $a_{i}/b_{i}$ on $\omega_{i}$. By \cref{eq:odds-gam1}, these odds can be viewed as a set of desirable gambles $\mathcal{D}= \{g_1,\dots,g_n\}$, where
\begin{equation}\label{eq:odds-gam2}
  g_{i}(\omega) \coloneqq
  \begin{cases}
    -a_i & \text{if }\omega = \omega_i \\
    b_i  & \text{ otherwise}.
\end{cases}
\end{equation}
Given odds $a_{i}/b_{i}$ on $\omega_{i}$, suppose that we modify the denominator in this odds to be $b_{j}$. To do so,  we can multiply $a_{i}/b_{i}$ by  $b_{j}/b_{j}$ to be 
\begin{equation}\label{eq:modify-odds}
a_{i}b_{j}/ b_{i}b_{j} = \left(\frac{ a_{i}b_{j}}{ b_{i}}\right)/b_{j}.
\end{equation}
Are new odds still desirable?
By the rationality axioms for desirability, the modified odds are still desirable.
\begin{lemma}\label{lem:scale-odds}
Let $a/b$ be odds on an outcome $\tilde{\omega}$ that are desirable. Then, for all $\alpha >0$, the odds $\alpha a/\alpha b$ on $\tilde{\omega}$ are also desirable.
\end{lemma}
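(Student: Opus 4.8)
The plan is to translate the statement about odds directly into a statement about the associated desirable gambles, and then to invoke axiom (D3). By \cref{eq:odds-gam1}, the odds $a/b$ on the outcome $\tilde{\omega}$ correspond to the gamble $g$ given by $g(\tilde{\omega}) = -a$ and $g(\omega) = b$ for $\omega \neq \tilde{\omega}$. The hypothesis that the odds $a/b$ on $\tilde{\omega}$ are desirable is, by this correspondence, precisely the statement that $g$ is a desirable gamble.

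Next I would write down the gamble $g_{\alpha}$ associated with the scaled odds $\alpha a/\alpha b$ on the same outcome $\tilde{\omega}$, once more reading off \cref{eq:odds-gam1}: this gives $g_{\alpha}(\tilde{\omega}) = -\alpha a$ and $g_{\alpha}(\omega) = \alpha b$ for $\omega \neq \tilde{\omega}$. Comparing the two payoff functions componentwise over all of $\Omega$, I would observe that $g_{\alpha} = \alpha g$.

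Finally, since $g$ is desirable and $\alpha > 0$, axiom (D3) immediately yields that $\alpha g = g_{\alpha}$ is desirable, which is exactly the assertion that the odds $\alpha a/\alpha b$ on $\tilde{\omega}$ are desirable. There is no substantial obstacle here: the only point meriting a moment's care is the elementary bookkeeping check that multiplying both the numerator $a$ and the denominator $b$ of the fractional odds by the same positive factor $\alpha$ is reflected exactly as multiplication of the entire payoff gamble by $\alpha$ (and not as some more elaborate transformation of the odds), after which the conclusion follows at once from the linearity axiom (D3).
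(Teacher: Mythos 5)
Your proof is correct and follows essentially the same route as the paper: identify the odds $a/b$ with the gamble $g$ taking value $-a$ at $\tilde{\omega}$ and $b$ elsewhere, note that the scaled odds $\alpha a/\alpha b$ correspond exactly to $\alpha g$, and conclude by axiom (D3). The only difference is that you spell out the componentwise verification that $g_{\alpha}=\alpha g$, which the paper leaves implicit.
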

\begin{proof}
Consider
the desirable gamble corresponding to the odds $a/b$:
\begin{equation}\label{eq:lem:scale-odd:1}
g(\omega)\coloneqq
\begin{cases}
-a \qquad \text{if }\omega = \tilde{\omega} \\ ~~ b\qquad \text{otherwise}.
\end{cases}
\end{equation} 
By rationality axiom (D3), for any $\alpha >0$, the gamble $\alpha g$ is also desirable. Hence, the corresponding odds $\alpha a/\alpha b$ are  also desirable.
\end{proof}
\Cref{lem:scale-odds} will be very useful when we want to modify odds to have the same denominator.

Suppose that the bookmaker specifies betting odds for all possible outcomes in $\Omega$. Before announcing these odds, the bookmaker may want to check whether there is a combination of bets from which the customer can make a sure gain, or in other words, whether he avoids sure loss \citep[Appendix 1, I4, p.~635]{1991:walley}:

\begin{theorem}\label{thm:asl-one-booker}
Let $ \Omega = \{\omega_1,\dots,\omega_n\}$. Suppose $a_{i}/b_{i}$ are betting odds on $\omega_{i}$. For each $i\in\{1,\dots,n\}$, let
\begin{equation}\label{eq:7.7}
 g_{i}(\omega)\coloneqq
\begin{cases}
  -a_{i} & \text{if } \omega = \omega_{i} \\
  b_{i} & \text{otherwise}
\end{cases}
\end{equation} 
be the gamble corresponding to the odds $a_{i}/b_{i}$. Then $\mathcal{D}\coloneqq\{g_1,\dots,g_n\}$ avoids sure loss if and only if 
 \begin{equation}\label{eq:7.8}
 \sum_{i=1}^{n}\frac{b_{i}}{a_{i} + b_{i}} \geq 1. 
 \end{equation}
\end{theorem}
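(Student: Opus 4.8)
The plan is to recognise the set $\mathcal{D}$, up to positive rescaling, as the set of desirable gambles induced by a suitable upper probability mass function, and then to invoke \cref{thm:asl:up-prob}. The key algebraic observation is that each gamble in \cref{eq:7.7} can be rewritten using the indicator of $\{\omega_i\}$: since $g_i(\omega_i)=-a_i$ and $g_i(\omega)=b_i$ for $\omega\neq\omega_i$, we have
\begin{equation*}
g_i = b_i - (a_i+b_i) I_{\{\omega_i\}} = (a_i+b_i)\left(\frac{b_i}{a_i+b_i} - I_{\{\omega_i\}}\right).
\end{equation*}
Because the odds satisfy $a_i,b_i>0$, the quantity $\overline{p}(\omega_i)\coloneqq b_i/(a_i+b_i)$ lies in $[0,1]$, so $\overline{p}$ is a genuine upper probability mass function on $\Omega=\{\omega_1,\dots,\omega_n\}$.

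Next I would connect this to $\underline{P}_{\overline{p}}$. By \cref{eq:lp-upper-prob}, $\underline{P}_{\overline{p}}(-I_{\{\omega_i\}})=-\overline{p}(\omega_i)$, so the bracketed term above is exactly $-I_{\{\omega_i\}} - \underline{P}_{\overline{p}}(-I_{\{\omega_i\}})$, i.e. the ``boundary'' gamble $f_i - \underline{P}_{\overline{p}}(f_i)$ with $f_i = -I_{\{\omega_i\}}$ that appears in the avoiding-sure-loss condition of \cref{def:47}. Hence $g_i = (a_i+b_i)\,[f_i - \underline{P}_{\overline{p}}(f_i)]$. Writing a generic non-negative combination as $\sum_i \lambda_i g_i = \sum_i \mu_i [f_i - \underline{P}_{\overline{p}}(f_i)]$ with $\mu_i \coloneqq \lambda_i (a_i+b_i)$, and noting that $\lambda_i \mapsto \mu_i$ is a bijection of $[0,\infty)$ onto itself (as $a_i+b_i>0$), the condition \cref{eq:asl} defining avoiding sure loss of $\mathcal{D}$ becomes identical, quantifier for quantifier, to the condition \cref{eq:3.1} defining avoiding sure loss of $\underline{P}_{\overline{p}}$. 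Thus $\mathcal{D}$ avoids sure loss if and only if $\underline{P}_{\overline{p}}$ does. Applying \cref{thm:asl:up-prob} then gives that $\underline{P}_{\overline{p}}$ avoids sure loss precisely when $\sum_{\omega\in\Omega}\overline{p}(\omega)\geq 1$, which is exactly $\sum_{i=1}^n b_i/(a_i+b_i)\geq 1$, the claimed inequality \cref{eq:7.8}.

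I expect the only delicate point to be the strict-versus-non-strict inequality that usually separates a lower prevision from its induced desirable gambles via \cref{eq:5.10}. This is avoided here because the avoiding-sure-loss criterion for lower previsions (\cref{def:47}) is already phrased in terms of the boundary gambles $f_i-\underline{P}(f_i)$ rather than strict under-cuts $f_i-\mu$, so the rescaling identity lines the two definitions up exactly. As a cross-check, and an alternative self-contained route avoiding \cref{thm:asl:up-prob} altogether, one can argue directly: writing $S_b\coloneqq\sum_i\lambda_i b_i$, a sure loss $\sum_i\lambda_i g_i(\omega_j)=S_b-\lambda_j(a_j+b_j)<0$ for every $j$ forces $\lambda_j>S_b/(a_j+b_j)$; multiplying by $b_j$ and summing yields $S_b>S_b\sum_j b_j/(a_j+b_j)$, hence $\sum_j b_j/(a_j+b_j)<1$ whenever $S_b>0$ (and $S_b=0$ forces all $\lambda_j=0$, giving no loss). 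Conversely, when $\sum_j b_j/(a_j+b_j)<1$, the choice $\lambda_j = c/(a_j+b_j)$ with $c>0$ realises such a sure loss, establishing the contrapositive in the remaining direction.
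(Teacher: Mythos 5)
Your proof is correct, but it takes a genuinely different route from the paper's. The paper proves \cref{thm:asl-one-booker} as the $m=1$ special case of \cref{thm:multi-odds}, whose proof (in \cref{app:C}) works on the dual side: it invokes Walley's characterisation that $\mathcal{D}$ avoids sure loss if and only if there exists a probability mass function $p$ with $\sum_{\omega}g(\omega)p(\omega)\geq 0$ for all $g\in\mathcal{D}$, translates each such constraint via \cref{lem:gam-odds} into $p(\omega_i)\leq b_i/(a_i+b_i)$, and in the converse direction explicitly constructs a feasible $p$ by normalising $\overline{p}$. You instead stay on the primal side: the rescaling identity $g_i=(a_i+b_i)\left[\overline{p}(\omega_i)-I_{\{\omega_i\}}\right]$ identifies $\mathcal{D}$, up to a coefficient bijection, with the boundary gambles $f_i-\underline{P}_{\overline{p}}(f_i)$ of \cref{def:47}, so avoiding sure loss of $\mathcal{D}$ and of $\underline{P}_{\overline{p}}$ coincide quantifier for quantifier, and \cref{thm:asl:up-prob} finishes the job. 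This effectively turns into a proof the observation the paper itself makes only as a remark after \cref{ex:single booky} (that the two conditions are ``exactly the same''), and your point about why no strict-versus-non-strict issue arises is exactly right, since \cref{def:47} is already phrased with the boundary gambles rather than through \cref{eq:5.10}. What each approach buys: the paper's dual route yields the multi-bookmaker statement \cref{thm:multi-odds} in one go and exhibits the compatible $p$ explicitly, which feeds directly into the linear-programming and complementary-slackness machinery used later; your route is shorter for the single-bookmaker case, and your cross-check via $S_b=\sum_i\lambda_i b_i$ is fully self-contained --- it needs neither \cref{thm:asl:up-prob} nor any duality citation, and in effect reproves \cref{thm:asl:up-prob} for this special case. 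The only hypothesis you add is $a_i,b_i>0$ (used so that $\lambda_i\mapsto\lambda_i(a_i+b_i)$ is a bijection, that $\overline{p}(\omega_i)$ is well defined, and that $S_b=0$ forces all $\lambda_j=0$); this is implicit in the betting interpretation and harmless, though note the paper's \cref{lem:gam-odds} is stated for merely non-negative $a_i,b_i$.
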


\begin{proof}
\Cref{thm:asl-one-booker} follows from \cref{thm:multi-odds} (proved further) for $m=1$. (Note that \cref{thm:asl-one-booker} is not used in the proof of \cref{thm:multi-odds}.)
\end{proof}
Note that, in practice, $\sum_{i=1}^{n}\frac{b_{i}}{a_{i} + b_{i}} $ is normally strictly greater than $1$, and
\begin{equation}
100\times\left(\sum_{i=1}^{n}\frac{b_{i}}{a_{i} + b_{i}} - 1\right)
\end{equation}
is called the \emph{over-round margin} \citep{2013:Emiliano,vlastakisbeating}.

Let's see an example of \cref{thm:asl-one-booker}.
\begin{example}\label{ex:single booky}
Suppose that a bookmaker provides betting odds $3/4$ for W, $13/5$ for D, and $16/5$ for L. As 
\begin{equation}
\frac{4}{3+4} + \frac{5}{13+5}+ \frac{5}{16+5} = 1.087 \geq 1,
\end{equation}
by \cref{thm:asl-one-booker}, the bookmaker avoids sure loss. Therefore, a customer cannot exploit these odds in order to make a sure gain.
\end{example}
Note that the condition for avoiding sure loss of $\mathcal{D}$ in \cref{thm:asl-one-booker} is exactly the same
as the condition for avoiding sure loss of $\underline{P}_{\overline{p}}$ in \cref{thm:asl:up-prob}. This condition is also equivalent to Proposition~4 in \citet{2015:Cortis}.

Next, we show that those odds can be modelled through an upper probability mass function:
\begin{lemma}\label{lem:gam-odds}
Let $ \Omega = \{\omega_1,\dots,\omega_n\}$, let $\omega_{i} \in \Omega$ and let $g$ be the corresponding gamble to the odds on $\omega_{i}$ defined as in \cref{eq:odds-gam2}, that is,
\begin{equation}\label{eq:7.4}
g_i(\omega)\coloneqq
\begin{cases}
-a_i \qquad \text{if }\omega = \omega_{i} \\ ~~ b_i\qquad \text{otherwise}, 
\end{cases}
\end{equation}
where $a_i$ and $b_i$ are non-negative.
If $p$ is a probability mass function, that is, if $\sum_{\omega \in \Omega} p(\omega) = 1$ and $p(\omega) \geq 0$ for all $\omega\in\Omega$, then 
\begin{equation}\label{eq:7.5}
\sum_{\omega \in \Omega} g_i(\omega) p(\omega) \geq 0 \qquad \Longleftrightarrow \qquad \frac{ b_i}{ a_i+ b_i} \geq p(\omega_{i}).
\end{equation} 
\end{lemma}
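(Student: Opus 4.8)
The plan is to prove the equivalence by directly evaluating the left-hand sum $\sum_{\omega \in \Omega} g_i(\omega) p(\omega)$ and algebraically rearranging it into the form on the right. Since $g_i$ takes only two values---namely $-a_i$ at the single outcome $\omega_i$ and $b_i$ everywhere else---the sum splits cleanly into the contribution from $\omega_i$ and the contribution from all other outcomes.

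Concretely, I would first write
\begin{equation}
\sum_{\omega \in \Omega} g_i(\omega) p(\omega) = -a_i\, p(\omega_i) + b_i \sum_{\omega \neq \omega_i} p(\omega).
\end{equation}
The key step is then to invoke the normalisation hypothesis $\sum_{\omega \in \Omega} p(\omega) = 1$, which gives $\sum_{\omega \neq \omega_i} p(\omega) = 1 - p(\omega_i)$. Substituting this in and collecting the terms in $p(\omega_i)$ yields
\begin{equation}
\sum_{\omega \in \Omega} g_i(\omega) p(\omega) = b_i - (a_i + b_i)\, p(\omega_i).
\end{equation}
Hence the inequality $\sum_{\omega} g_i(\omega) p(\omega) \geq 0$ is equivalent to $b_i \geq (a_i + b_i)\, p(\omega_i)$, and dividing both sides by the positive quantity $a_i + b_i$ delivers exactly $\frac{b_i}{a_i+b_i} \geq p(\omega_i)$, as required.

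This is essentially a one-line computation, so there is no real technical obstacle; the only point needing care is the division by $a_i + b_i$, which is harmless since the appearance of the fraction $\frac{b_i}{a_i+b_i}$ in the statement already presupposes $a_i + b_i > 0$ (a genuine pair of odds has a nonzero denominator-plus-numerator). I would note this explicitly so that each implication is reversible, ensuring the equivalence holds in both directions rather than merely one. The probability mass function assumption is used precisely once, in replacing $\sum_{\omega \neq \omega_i} p(\omega)$ by $1 - p(\omega_i)$, and I would flag that this is the only place the constraint $\sum_\omega p(\omega) = 1$ enters.
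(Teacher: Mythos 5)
Your proposal is correct and follows essentially the same route as the paper's own proof: split the sum at $\omega_i$, use $\sum_{\omega \neq \omega_i} p(\omega) = 1 - p(\omega_i)$, and rearrange to obtain $\frac{b_i}{a_i+b_i} \geq p(\omega_i)$. Your explicit remark that $a_i + b_i > 0$ is needed for the final division is a minor point the paper leaves implicit, but otherwise the two arguments coincide.
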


\begin{proof}
Suppose that $\sum_{\omega \in \Omega}  p(\omega) = 1$  and for all $i, p(\omega_i) \geq 0$, then
\begin{align}
\sum_{\omega \in \Omega} g_i(\omega) p(\omega) \geq 0
& \iff -a_ip(\omega_{i}) + b_i\sum_{\omega \neq \omega_{i}} p(\omega) \geq 0\\
& \iff -a_ip(\omega_{i}) + b_i(1- p(\omega_{i})) \geq 0 \\ \label{eq:odd>p(omega)}
& \iff \frac{ b_i}{ a_i + b_i} \geq p(\omega_{i}).
\end{align} 
\end{proof}
In order to avoid sure loss, the odds $a_i/b_i$ on $\omega_{i}$
must satisfy \cref{eq:odd>p(omega)} \citep[\S 3.3.3~(a)]{1991:walley} (see the proof of \cref{thm:multi-odds} for more detail). Therefore, the collection of these odds can be viewed as an upper probability mass function, that is, 
\begin{equation}\label{eq:set over_prob}
\forall i\in\{1, \dots, n\}\colon \overline{p}(\omega_{i})\coloneqq\frac{ b_{i}}{a_{i}+b_{i}}.
\end{equation}

\subsection{Betting with multiple bookmakers}\label{multi bookies}

In the market, there are many bookmakers. We are interested in whether a customer can exploit odds from different bookmakers in order to make a sure gain. To do so, we model betting odds from different bookmakers as a set of desirable gambles, and we check avoiding sure loss of this set. We recover the known result that it is optimal to pick maximal odds on each outcome \citep{vlastakisbeating}. As greater odds correspond to a higher payoff to a customer, a sensible strategy for him is to pick the greatest odds on each outcome.

\begin{theorem}\label{thm:multi-odds}
Let $ \Omega = \{\omega_1,\dots,\omega_n\}$. Suppose there are $m$ different bookmakers. For each $k\in\{1,\dots,m\}$, let $a_{ik}/b_{ik}$ be the betting odds on $\omega_{i}$ provided by bookmaker $k$.
 For each $i\in\{1,\dots,n\}$ and $k\in \{1,\dots,m\}$, let
\begin{equation}\label{eq:7.14}
 g_{ik}(\omega)\coloneqq
\begin{cases}
  -a_{ik} & \text{if } \omega = \omega_{i}\\
  b_{ik} & \text{otherwise}.
\end{cases}
\end{equation}
be the desirable gamble corresponding to the odds $a_{ik}/b_{ik}$.
Let $a_{i}^{*}/b_{i}^{*}$ be the maximal betting odds on outcome $\omega_{i}$, that is,
\begin{equation}\label{eq:7.13}
a_{i}^{*}/b_{i}^{*} \coloneqq \max_{k=1}^m \left\lbrace a_{ik}/b_{ik} \right\rbrace.
\end{equation}
 Then the set of desirable gambles $\mathcal{D}=\{g_{ik}\colon  i\in\{1,\dots,n\},\ k \in \{1,\dots,m\}\}$ avoids sure loss if and only if 
 \begin{equation}\label{eq:7.15}
 \sum_{i=1}^{n}\frac{ b_{i}^{*}}{ a_{i}^{*} + b_{i}^{*}} \geq 1.
 \end{equation}
\end{theorem}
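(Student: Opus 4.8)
The plan is to pass from the primal condition on non-negative combinations of gambles to the dual picture of dominating probability mass functions, where the role of the maximal odds becomes transparent and the problem reduces to \cref{thm:asl:up-prob}. The key tool is the standard duality for avoiding sure loss (the separating-hyperplane/Farkas dual of \cref{def:asl}): a finite set of desirable gambles $\mathcal{D}$ avoids sure loss if and only if there exists a probability mass function $p$ on $\Omega$ with $\sum_{\omega\in\Omega} p(\omega)g(\omega)\geq 0$ for every $g\in\mathcal{D}$. Indeed, if such $p$ exists then every non-negative combination has non-negative $p$-expectation and hence non-negative maximum; conversely, if no such $p$ exists, LP duality produces a non-negative combination that is strictly negative on all of $\Omega$.

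First I would apply \cref{lem:gam-odds} to each gamble $g_{ik}$: for a probability mass function $p$, the inequality $\sum_{\omega}p(\omega)g_{ik}(\omega)\geq 0$ is equivalent to $p(\omega_i)\leq \frac{b_{ik}}{a_{ik}+b_{ik}}$. Thus $\mathcal{D}$ avoids sure loss if and only if there is a probability mass function $p$ satisfying $p(\omega_i)\leq \frac{b_{ik}}{a_{ik}+b_{ik}}$ for all $i$ and all $k$. Next I would collapse the $m$ constraints on each coordinate into a single one: writing $\frac{b_{ik}}{a_{ik}+b_{ik}}=\bigl(1+a_{ik}/b_{ik}\bigr)^{-1}$ shows this quantity is strictly decreasing in the ratio $a_{ik}/b_{ik}$, so the maximal odds $a_i^{*}/b_i^{*}$ produce the tightest (smallest) upper bound, giving $\frac{b_i^{*}}{a_i^{*}+b_i^{*}}=\min_k \frac{b_{ik}}{a_{ik}+b_{ik}}$. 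Hence every constraint on $p(\omega_i)$ is implied by $p(\omega_i)\leq \frac{b_i^{*}}{a_i^{*}+b_i^{*}}$.

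Setting $\overline{p}(\omega_i)\coloneqq\frac{b_i^{*}}{a_i^{*}+b_i^{*}}$, the previous step shows that $\mathcal{D}$ avoids sure loss if and only if there is a probability mass function $p$ with $p(\omega_i)\leq \overline{p}(\omega_i)$ for every $i$. But this is exactly the condition that the dominating-probability set of the upper probability mass function $\overline{p}$ is non-empty, i.e. that $\underline{P}_{\overline{p}}$ avoids sure loss. I would therefore finish by invoking \cref{thm:asl:up-prob}, which guarantees existence of such a $p$ precisely when $\sum_{i=1}^{n}\overline{p}(\omega_i)=\sum_{i=1}^{n}\frac{b_i^{*}}{a_i^{*}+b_i^{*}}\geq 1$, the desired \cref{eq:7.15}.

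The step I expect to be the main obstacle is the reduction from all $m$ bookmakers to the maximal odds, namely arguing cleanly that the non-maximal odds impose no additional restriction. In the primal formulation this redundancy is not obvious, since a hypothetical sure-loss combination could mix gambles from several bookmakers simultaneously; it is precisely the passage to the dual, decoupled, coordinate-wise constraints $p(\omega_i)\leq \frac{b_{ik}}{a_{ik}+b_{ik}}$, combined with the monotonicity of $\frac{b}{a+b}$ in the odds, that makes the redundancy manifest. A secondary point needing care is the final existence claim that a probability mass function respecting prescribed upper bounds $\overline{p}(\omega_i)$ exists exactly when $\sum_i \overline{p}(\omega_i)\geq 1$; this, however, is delivered directly by \cref{thm:asl:up-prob}, so no separate combinatorial argument is required.
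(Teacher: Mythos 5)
Your proposal is correct and follows essentially the same route as the paper's proof in \cref{app:C}: both pass to the dual existence of a probability mass function (Walley's characterisation of avoiding sure loss), translate each gamble constraint via \cref{lem:gam-odds} into $p(\omega_i)\leq\frac{b_{ik}}{a_{ik}+b_{ik}}$, and collapse the $m$ bookmakers per outcome through the identity $\frac{b_i^*}{a_i^*+b_i^*}=\min_k\frac{b_{ik}}{a_{ik}+b_{ik}}$. The only cosmetic difference is the finishing step for sufficiency: the paper exhibits the normalised feasible solution $p(\omega_i)=\frac{b_i^*}{S(a_i^*+b_i^*)}$ explicitly, whereas you delegate the existence of a dominated probability mass function to \cref{thm:asl:up-prob} --- a substitution the paper itself licenses in the remark, following \cref{thm:asl-one-booker}, that the two avoiding-sure-loss conditions coincide.
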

\begin{proof}
See \cref{app:C}.
\end{proof}
\Cref{thm:multi-odds} tells us that to check avoiding sure loss of several bookmakers, we only need to consider the maximal odds on each outcome.
Let's see an example. 
\begin{example}
Suppose that in the market there are three bookmakers providing different odds for  outcomes W, D, and L as in \cref{table:multi-odds}.

\begin{center}
\begin{tabular}{|c|c|c|c|c|}\hline
\multirow{2}{*}{Outcomes} & \multicolumn{3}{c|}{Betting companies}&\multirow{2}{*}{Maximum odds} \\ \cline{2-4} 
 & River  & Mountain & Forest & \\ \hline
W & $4/5$ &  $17/20$ & $3/4$ &  $17/20$     \\
D & $13/5$ & $14/5$ & $13/5$ &  $14/5$   \\
L & $10/3$ &  $3$ & $16/5$ &  $10/3$     \\ \hline
\end{tabular}
\captionof{table}{Table of odds provided by three bookmakers}\label{table:multi-odds}
\end{center}

Let $\mathcal{D}$ be the set of desirable gambles corresponding to all of these odds.
Note that the maximal betting odds are $17/20$ for W, $14/5$ for D and $10/3$ for L.
As
\begin{equation}
\frac{20}{17+20} + \frac{5}{14+5}+ \frac{3}{10+3} = 1.034 \geq 1,
\end{equation}
by \cref{thm:multi-odds}, we conclude that $\mathcal{D}$ avoids sure loss. Therefore, a customer cannot exploit these odds to make a sure gain.
\end{example}

Consider a customer who is interested in odds provided by the three bookmakers as in \cref{table:multi-odds}. A sensible strategy to him is to pick the greatest odds on each outcome. However, this means that the customer will never choose any odds provided by Forest, because all of Forest's odds are less than the odds provided by other bookmakers. Therefore, to encourage customers to bet with them, Forest may offer free coupons to the customer under certain conditions. In the next section, we will look at these free coupons in more detail.

\section{Free coupons for betting}\label{sec:free bet}

A free coupon is a free stake that is given by a bookmaker to a customer who first bets with him. The free coupon can be spent on some betting odds that the customer wants to bet. In fact, the free coupon is not truly free, since the customer firstly has to bet on some odds before he claims the free coupon. Moreover, the bookmakers usually set some required conditions, for instance, a limit on the amount of free coupons that customers can claim, or a restriction of choices that customers can spend their free coupons. 
 
We were wondering whether customers can exploit those given odds and free coupons in order to find a strategy of betting that incurs a sure gain. If there is a possible way to do that, then we will find an algorithm that gives such a strategy.

For simplicity in this study, we set up standard requirements for claiming free coupons from the bookmakers as follows: 
\begin{enumerate}
\item Once the customer has placed his first bet, the bookmaker will give him a free coupon whose value is equal to the value of the bet that he placed.
\item The bookmaker sets the maximum value of the free coupon.
\item The free coupon only applies to the customer's first bet with the bookmaker.
\item The customer must spend his free coupon with the same bookmaker on other outcomes.
\item The customer must spend his free coupon on only a single outcome.
\end{enumerate}
Here is an example of claiming free coupons. 

\begin{example}\label{ex:free-coupon1}
Suppose that Forest has the following offer: a free coupon will be given to a customer who first bets with Forest, and the value of the coupon is equal to the value of the first bet that the customer placed.  

From \cref{table:multi-odds}, if James, who is a customer, has never bet with Forest and he decides to place \pounds$5$ on the odds $13/5$ of the outcome D, then he will play \pounds$5$ to Forest and he will claim a free coupon valued \pounds$5$. James can use the free coupon to bet on other outcomes with Forest.  
\end{example}

Once James receives a free coupon, he can spend his free coupons as in the next example.

\begin{example}\label{ex:free-coupon2}
Continuing from the previous example, James has his free coupon valued \pounds$5$ from Forest. 
Since James must spend his free coupon valued \pounds$5$ on only a single outcome, by \cref{lem:scale-odds}, we modify odds $3/4$ by multiplying them by $5/5$. Now all odds have the same denominator which is $5$.
\begin{center}
\begin{tabular}{l|c c c}
Outcomes       			&W &D & L\\
\hline
odds & $(\frac{3\cdot 5}{4})/5$ & $13/5$  & $16/5$
\end{tabular}
\captionof{table}{Table of modified odds}\label{table:odds-ex.4}
\end{center} 
If James spends his free coupon to bet on L and the true outcome is L, then Forest will lose \pounds$16$; otherwise Forest will lose nothing. On the other hand, if James spends the coupon to bet on W and the true outcome is W, then Forest will lose  \pounds$\frac{3\cdot 5}{4}$; otherwise Forest will lose nothing. A total payoff to Forest is summarised in \cref{table:payoff-ex.4}.

\begin{center}
\begin{tabular}{|c|c|c|c|}\hline
\multirow{2}{*}{Betting a free coupon on} & \multicolumn{3}{c|}{Outcomes} \\ \cline{2-4} 
& W  & D & L  \\ \hline
 L &  $0$ & $0$&  $-16$     \\
 W & $-\frac{3\cdot 5}{4}$ &$0$&$0$  \\ \hline
\end{tabular}
\captionof{table}{Table of total payoff}\label{table:payoff-ex.4}
\end{center}
\end{example}

Suppose that the customer first bets on an outcome $\omega_{i}$ with corresponding odds  $ a_{i}/ b_{i}$. The payoff to the bookmaker is represented as a gamble $g_{\omega_{i}}$ in the \cref{table:7.6.5}. Because this is his first bet, the customer receives a free coupon valued $b_{i}$, and he will spend this free coupon to bet on a single outcome. Suppose that he bets on $\omega_{j}$ with corresponding odds  $ a_{j}/ b_{j}$. As the denominators are not necessarily equal, we multiply odds  $ a_{j}/ b_{j}$ by $\frac{b_{i}}{b_{i}}$. The modified odds are $(\frac{a_j\cdot b_i}{b_j})/b_i$. 
Note that as the free coupon must be spent on other outcomes, $\omega_{j}$ cannot coincide with $\omega_{i}$.

If the true outcome is $\omega_{j}$, then the bookmaker will lose $ \frac{a_j\cdot b_i}{b_j}$. Otherwise the bookmaker will gain nothing. This payoff to the bookmaker is viewed as a gamble $\tilde{g}_{\omega_{j}}$ in the \cref{table:7.6.5}. As $g_{\omega_{i}}$ and $\tilde{g}_{\omega_{j}}$ are desirable to the bookmaker, by rationality axiom (D4),  $g_{\omega_{i}}+\tilde{g}_{\omega_{j}}$ is also desirable.

\begin{center}
\begin{tabular}{l|c c c}
Outcomes       			&$\omega_{i}$ &$\omega_{j}$ & others \\
\hline
$g_{\omega_{i}}$ & $-a_{i}$ & $b_i $  & $b_{i}$ \\
\hline
$\tilde{g}_{\omega_{j}}$ & $0$ & $-\frac{a_j\cdot b_i}{b_j} $  & $0$ \\
\hline
$g_{\omega_{i}}+\tilde{g}_{\omega_{j}}$ & $-a_{i}$ & $\frac{(b_j-a_j)b_i}{b_j} $  & $b_{i}$
\end{tabular}
\captionof{table}{Table of the first-free desirable gamble to the bookmaker}
\label{table:7.6.5}
\end{center} 
We denote  $g_{\omega_{i}\omega_{j}} \coloneqq g_{\omega_{i}}+\tilde{g}_{\omega_{j}}$ and call it the \textit{first-free} desirable gamble to the bookmaker. Note that $-g_{\omega_{i}\omega_{j}}$ is desirable to the customer. The customer can bet on other odds, but he will not get any free coupon from his additional bets. This is because the bookmaker gives him the free coupon only once.

Also note that in the actual market, there is usually more than one bookmaker offering a free coupon. Therefore, the customer can first bet with different bookmakers in order to obtain several free coupons. These can be viewed as a first-free desirable gamble combining from several first-free desirable gambles. In this study, we only consider the case that customer first bets and claims a free coupon from a single bookmaker. In this case, we face a combinatorial problem over all first-free desirable gambles.

We would like to check whether $\mathcal{D} \cup \{g_{\omega_{i}\omega_{j}}\}$ avoids sure loss or not. By \cref{thm:check-asl}, if $\mathcal{D}$  avoids sure loss, then $\mathcal{D} \cup \{g_{\omega_{i}\omega_{j}}\}$ avoids sure loss if and only if $\overline{E}(g_{\omega_{i}\omega_{j}}) \geq 0$. 
In the case that $\mathcal{D} \cup \{g_{\omega_{i}\omega_{j}}\}$ does not avoid sure loss, by \cref{thm:check-asl}, the bookmaker will lose at least  $|\overline{E}(g_{\omega_{i}\omega_{j}})|$ which is the customer's highest sure gain. Therefore, the customer can combine $g_{\omega_{i}\omega_{j}}$ with a non-negative combination of $g_i$ to obtain a sure gain $|\overline{E}(g_{\omega_{i}\omega_{j}})|$.

Let $f$ be any first-free desirable gamble to the bookmaker.
Before using the results in Section 2.3 to calculate the natural extension of $f$, we have to check whether $\mathcal{D} $ avoids sure loss. If $\underline{P}_{\overline{p}}$ does not avoid sure loss, then without a free coupon, there is a non-negative combination of gambles that the customer can exploit to make a sure gain. On the other hand, if $\underline{P}_{\overline{p}}$ avoids sure loss, then we can write $f$ in terms of its level sets and use \cref{cor:decompose} to calculate the natural extension of $f$. 

\begin{example}\label{ex:free-coupon3}
Let Forest provide betting odds on W, D, and L as in \cref{table:multi-odds}. By \cref{eq:set over_prob}, we have 
\begin{equation}
\overline{p}(W) = \frac{4}{7} \qquad \overline{p}(D) = \frac{5}{18} \qquad \overline{p}(L) =  \frac{5}{21}.
\end{equation}
Since $\overline{p}(W)+  \overline{p}(D)+\overline{p}(L)  \geq 1$,
$\underline{P}_{\overline{p}}$ avoids sure loss
by \cref{thm:asl:up-prob}.

Continuing from \cref{ex:free-coupon2}, suppose that James first bets on D and spends his free coupon to bet on L. Then, the first-free desirable gamble $g_{DL}$ to Forest is as follows:
\begin{center}
\begin{tabular}{l|c c c}
Outcomes       			& $W$ & $D$ & $L$ \\
\hline
$g_{D}$ & $5$ & $-13$  & $5$ \\
$g_{L}$ & $0$ & $0$ & $-16$ \\
$g_{DL}$ & $5$ & $-13$  & $-11$\\
\end{tabular}
\captionof{table}{Table of desirable gambles to Forest}
\label{table:free coupon-gambles}
\end{center} 
We decompose  $g_{DL} $ in terms of its level sets as
\begin{equation}\label{eq:decomp g_DL}
g_{DL} = -13I_{A_{0}} +2I_{A_{1}}+16I_{A_{2}}
\end{equation}
where $A_{0}= \{W,D,L\}$, $A_{1}= \{W,L\}$ and $A_{2}= \{W\}$.
By \cref{thm:ne:up-prob}, we have
\begin{align}
\overline{E}_{\overline{p}}(A_{0}) & = \min\{\overline{p}(W)+\overline{p}(D)+\overline{p}(L),1 \} = 1 \\
\overline{E}_{\overline{p}}(A_{1}) & = \min\{\overline{p}(W)+\overline{p}(L),1 \} = \frac{17}{21}\\
\overline{E}_{\overline{p}}(A_{2}) & =\min\{\overline{p}(W),1\} = \frac{4}{7}.
\end{align}
Substitute $\overline{E}_{\overline{p}}(A_{i})$, $i \in \{ 0,1,2\}$ into \cref{eq:decomp g_DL}. By \cref{cor:decompose}, we have
\begin{equation}
\overline{E}_{\overline{p}}(g_{DL}) = -13\overline{E}_{\overline{p}}(A_{0}) +2\overline{E}_{\overline{p}}(A_{1})+16\overline{E}_{\overline{p}}(A_{2}) = -\frac{47}{21}.
\end{equation}
As $\overline{E}_{\overline{p}}(g_{DL})= -\frac{47}{21}<0$, by \cref{thm:check-asl}, Forest does not avoid sure loss. Therefore, with the free coupon, James can make a sure gain.
\end{example}

How should James bet? Remember that $ \Omega = \{\omega_1,\dots,\omega_n\}$ and that $g_i$ is the corresponding gamble to the odds $a_i/b_i$ on $\omega_{i}$:
\begin{equation}\label{eq:gam-odd}
 g_{i}(\omega)=
\begin{cases}
-a_{i} \qquad \text{if } \omega = \omega_{i} \\ ~~ b_{i}\qquad \text{otherwise}.
\end{cases}
\end{equation}
Note that we can calculate $\overline{E}_{\overline{p}}(f)$, or $\overline{E}_{\mathcal{D}_{\underline{P}_{\overline{p}}}}(f)$, by \cref{def:E_P}, for any gamble $f$ by solving the following linear program:
\begin{align}\tag{Pa}\label{eq:Pa}
\linprogref{P} \qquad \min \quad & \alpha \\\label{eq:Pb}\tag{Pb}
    \text{subject to} \quad &  \begin{cases} \forall \omega \in \Omega\colon \alpha - \sum_{i=1}^{n} g_{i}(\omega)\lambda_{i} \geq  f(\omega) \\
 \forall i = 1,\dots,n \colon \lambda_i \geq 0,
 \end{cases}
\end{align}
where the optimal $\alpha$ gives $\overline{E}_{\overline{p}}(f)$.
If the optimal $\alpha$ is strictly negative, then
the optimal $\lambda_1$, \dots, $\lambda_n$ give a combination of bets for a customer to make a sure gain.
The dual of \linprogref{P} is 
\begin{align}\label{eq:Da}
\tag{Da}
   \linprogref{D} \qquad \max \quad & \sum_{\omega \in \Omega}f(\omega)p(\omega)  \\  \label{eq:Db1}\tag{Db1}
\text{subject to} \quad & \begin{cases} \forall g_i\colon  \sum_{\omega \in \Omega} g_i(\omega) p(\omega) \geq 0  \\ \forall \omega \colon  p(\omega) \geq 0 \\
 \sum_{\omega \in \Omega}p(\omega) = 1.\end{cases}
\end{align}
After applying \cref{lem:gam-odds}, the constraints in \cref{eq:Db1} become:
\begin{align}\label{eq:Db2}\tag{Db2}
\text{subject to} \quad & \begin{cases} \forall \omega \colon 0 \leq p(\omega) \leq \overline{p}(\omega) \\
 \sum_{\omega \in \Omega}p(\omega) = 1.\end{cases}
\end{align}
We see that the objective function \cref{eq:Da} is $E_{p}(f)$, the expectation of $f$ with respect to the probability mass function $p$. As the optimal value of \linprogref{D} is $\overline{E}_{\overline{p}}(f)$, if we can find a $p$ that satisfies the dual constraints \cref{eq:Db2} and $\overline{E}_{\overline{p}}(f) =E_{p}(f)$, then we have found an optimal solution of \linprogref{D}. 

We now first construct a $p$, by assigning as much mass as possible to the smallest level sets. Then, in \cref{thm:upE(f)=E(f)}, we prove that this $p$ satisfies \cref{eq:Db2} and  $\overline{E}_{\overline{p}}(f) =E_{p}(f)$. 

\begin{breakablealgorithm}
\caption{Construct an optimal solution $p$ of \linprogref{D}}
\label{alg:construct_p}
\begin{algorithmic}
\REQUIRE A gamble $f$, a set of outcomes $\Omega$.
\ENSURE An optimal solution $p$ of \linprogref{D}.
 \begin{enumerate}
\item Rewrite $f$ as
\begin{equation}
f = \sum_{i=0}^{m} \lambda_iA_i
\end{equation}
where $\Omega = A_{0} \supsetneq A_{1} \supsetneq\dots\supsetneq A_{m}\supsetneq \emptyset$ are the level sets of $f$ and $\lambda_{0} \in \mathbb{R}$, $\lambda_1,\dots,\lambda_{m} > 0$.
\item Order $\omega_1$, $\omega_2$, \dots, $\omega_n$ such that 
\begin{equation}\label{eq:label_omega}
\forall i \leq j\colon A_{\omega_i} \subseteq A_{\omega_j},
\end{equation} where $A_{\omega}$ is the smallest level set to which $\omega$ belongs, that is
\begin{equation}
A_{\omega} = \bigcap_{\substack{i=0\\ \omega \in A_i}}^{m}A_i.
\end{equation}
So, we start with those $\omega$ in $A_m$, then those in $A_{m-1}\setminus A_m$,
then those in $A_{m-2}\setminus A_{m-1}$,
and so on.
\item Let $k$ be the smallest index such that
\begin{equation}
\sum_{j=1}^{k}\overline{p}(\omega_j) \geq 1.
\end{equation} 
There is always such $k$ because $\underline{P}_{\overline{p}}$ avoids sure loss.
Define $p$ as follows:
\begin{align}\label{eq:p(omega)}
p(\omega_i) \coloneqq \begin{cases} \overline{p}(\omega_i) \quad  & \text{if } i< k  \\ 1- \sum_{j=1}^{i-1}\overline{p}(\omega_j)& \text{if } i = k \\
 0 \quad & \text{if } i > k.
\end{cases}
\end{align}
\end{enumerate}
\end{algorithmic}
\end{breakablealgorithm}

We then show that $p$ in \cref{eq:p(omega)} satisfies \cref{eq:Db2} and  $\overline{E}_{\overline{p}}(f) =E_{p}(f)$.
\begin{theorem}\label{thm:upE(f)=E(f)}
The probability mass function $p$ defined by \cref{eq:p(omega)} satisfies \cref{eq:Db2} and  $\overline{E}_{\overline{p}}(f) =E_{p}(f)$.
\end{theorem}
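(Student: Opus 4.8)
The plan is to verify the two assertions separately: first that $p$ is a valid dual-feasible probability mass function, i.e.\ that it satisfies \cref{eq:Db2}, and then that $\overline{E}_{\overline{p}}(f) = E_p(f)$.

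First I would check \cref{eq:Db2}. For $i < k$ we have $p(\omega_i) = \overline{p}(\omega_i)$, so $0 \le p(\omega_i) \le \overline{p}(\omega_i)$ holds trivially, and for $i > k$ we have $p(\omega_i) = 0$, which is also immediate. The only delicate index is $i = k$, where $p(\omega_k) = 1 - \sum_{j=1}^{k-1} \overline{p}(\omega_j)$. Minimality of $k$ gives $\sum_{j=1}^{k-1} \overline{p}(\omega_j) < 1$, hence $p(\omega_k) \ge 0$, while the defining inequality $\sum_{j=1}^{k} \overline{p}(\omega_j) \ge 1$ rearranges directly to $p(\omega_k) \le \overline{p}(\omega_k)$; this is exactly where the threshold condition (and thus avoiding sure loss) enters. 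Normalisation then follows by telescoping, since the later terms vanish: $\sum_{i} p(\omega_i) = \sum_{j=1}^{k-1}\overline{p}(\omega_j) + \bigl(1 - \sum_{j=1}^{k-1}\overline{p}(\omega_j)\bigr) = 1$.

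For the second assertion, the key observation — and the crux of the argument — is that the ordering chosen in step~2 of \cref{alg:construct_p} makes every level set an initial segment of $\omega_1, \dots, \omega_n$. Concretely, since that ordering lists the elements of $A_m$ first, then those of $A_{m-1}\setminus A_m$, then $A_{m-2}\setminus A_{m-1}$, and so on, the set $A_\ell = A_m \cup (A_{m-1}\setminus A_m)\cup\dots\cup(A_\ell\setminus A_{\ell+1})$ coincides with $\{\omega_1,\dots,\omega_{n_\ell}\}$, where $n_\ell \coloneqq |A_\ell|$. I would state and justify this initial-segment property first, since everything else then reduces to arithmetic on partial sums.

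Granting this, I would prove the per-level-set identity $P(A_\ell) = \min\{U(A_\ell),1\}$ for each $\ell$, writing $P(A_\ell) \coloneqq \sum_{\omega\in A_\ell} p(\omega)$, by splitting on how $n_\ell$ compares to $k$. If $n_\ell < k$, every element of $A_\ell$ carries mass $\overline{p}$, so $P(A_\ell) = U(A_\ell)$, and minimality of $k$ forces $U(A_\ell) < 1$, giving $\min\{U(A_\ell),1\} = U(A_\ell) = P(A_\ell)$. If $n_\ell \ge k$, the partial sum of $p$ already equals $1$ at index $k$ and picks up only zeros afterwards, so $P(A_\ell) = 1$, while $U(A_\ell) \ge \sum_{j=1}^k \overline{p}(\omega_j) \ge 1$ yields $\min\{U(A_\ell),1\} = 1 = P(A_\ell)$; the case $A_0 = \Omega$ falls here and uses $\sum_{\omega}\overline{p}(\omega)\ge 1$. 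Finally I would assemble the pieces: by \cref{thm:ne:up-prob} and \cref{cor:decompose}, $\overline{E}_{\overline{p}}(f) = \sum_{i=0}^m \lambda_i \min\{U(A_i),1\}$, whereas linearity of expectation over the level-set decomposition gives $E_p(f) = \sum_{i=0}^m \lambda_i P(A_i)$, so the termwise equality forces $\overline{E}_{\overline{p}}(f) = E_p(f)$ regardless of the values of the $\lambda_i$. I expect the main obstacle to be the initial-segment observation; once it is established, the remainder is a routine case split on partial sums.
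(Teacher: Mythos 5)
Your proposal is correct and follows essentially the same route as the paper's proof: verify dual feasibility directly (with the only nontrivial bound at $i=k$), establish the per-level-set identity $\min\{\sum_{\omega\in A_i}\overline{p}(\omega),1\}=E_p(A_i)$ by a case split, and conclude via \cref{thm:ne:up-prob}, \cref{cor:decompose} and linearity of $E_p$. Your initial-segment formulation with the split $n_\ell<k$ versus $n_\ell\ge k$ is exactly the paper's split $A_i\subsetneq A_{\omega_k}$ versus $A_i\supseteq A_{\omega_k}$ in different notation, and you are in fact slightly more explicit than the paper in justifying $p(\omega_k)\ge 0$ and $U(A_\ell)<1$ in the first case.
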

\begin{proof} 
Let $\Omega = \{\omega_1,\dots,\omega_n\}$ be ordered as in \cref{eq:label_omega}, and let $k$ be the smallest index such that $ \sum_{j=1}^{k}\overline{p}(\omega_j) \geq 1$.
By \cref{eq:p(omega)}, $\sum_{i=1}^{n}p(\omega_i) = 1$ and
\begin{equation}
  p(\omega_k)
  = 1- \sum_{j=1}^{k-1}\overline{p}(\omega_j)
  \leq \sum_{j=1}^{k}\overline{p}(\omega_j)- \sum_{j=1}^{k-1}\overline{p}(\omega_j)
  = \overline{p}(\omega_k), 
\end{equation}
so for all $i \in \{1,\dots, n\}$,  
$ 0 \leq p(\omega_i) \leq \overline{p}(\omega_i)$. Therefore, $p$ satisfies \cref{eq:Db2}. Next, we will show that for all level sets $A_i$, 
\begin{equation}\label{eq:min_p=sum_p}
\min\left\lbrace\sum_{\omega \in A_i}\overline{p}(\omega), 1 \right\rbrace = E_p(A_i).
\end{equation}
Remember that $A_{\omega_k}$ is the smallest level set that contains $\omega_k$. By \cref{eq:p(omega)}, for all $A_i \subsetneq  A_{\omega_k} $, we know that $p(\omega) = \overline{p}(\omega)$ for all $\omega \in A_i$, and so
\begin{equation}
\min\left\lbrace\sum_{\omega \in A_i}\overline{p}(\omega), 1 \right\rbrace = \sum_{\omega \in A_i}\overline{p}(\omega)  = \sum_{\omega \in A_i} p(\omega).
\end{equation}
For all $A_i \supseteq A_{\omega_k}$, we know that $ \sum_{\omega \in A_i} p(\omega) = 1$ and $ \sum_{\omega \in A_i} \overline{p}(\omega) \geq 1$, so
\begin{equation}
\min\left\lbrace\sum_{\omega \in A_i}\overline{p}(\omega), 1 \right\rbrace = 1  = \sum_{\omega \in A_i} p(\omega).
\end{equation} 
Hence, \cref{eq:min_p=sum_p} holds. Therefore,
\begin{align}
\overline{E}_{\overline{p}}(f) & = \sum_{i=0}^{m}\lambda_i\overline{E}(A_i) & \text{(by \cref{eq:cor:upE(f)})}\\
& = \sum_{i=0}^{m}\lambda_i\min\left\lbrace\sum_{\omega \in A_i}\overline{p}(\omega), 1 \right\rbrace  & \text{(by \cref{eq:ne-upper-prob})}\\
& = \sum_{i=0}^{m}\lambda_iE_p(A_i) & \text{(by \cref{eq:min_p=sum_p})} \\
& = E_p(f)
\end{align}
\end{proof}
To sum up, we can use \cref{eq:p(omega)} to construct  an optimal solution $p$ of \linprogref{D}.

We will use complementary slackness to find an optimal solution of the dual of \linprogref{D} \citep[p.~329]{1987:Winston}.
Note that, as \linprogref{D} has an optimal solution and the dual problem is bounded above, then by the strong duality theorem \cite[p.~71]{1995:Saigal}, an optimal solution of \linprogref{P} exists and achieves the same optimal value. In addition, a pair of solutions to \linprogref{P} and \linprogref{D} is optimal if, and only if, they satisfy the complementary slackness condition  \citep[p.~62]{1993:Fang:Puthenpura}. Specifically, in our case, the condition holds for any non-negative variable and its corresponding dual constraint \citep[p.~184, ll.~3--5]{2009:Griva:Nash:Sofer}.
More, precisely, let $p(\omega_1)$, \dots, $p(\omega_n)$ be any feasible solution of \linprogref{D}, and let $\alpha$, $\lambda_1$, \dots, $\lambda_n$ be any feasible solution of \linprogref{P}. 
Then, by complementary slackness,
these solutions are optimal if, and only if, for all $j\in\{1,\dots,n\}$, we have that
\begin{equation}
\left(\alpha - \sum_{i=1}^{n} g_{i}(\omega_j)\lambda_{i} - f(\omega_j)\right)p(\omega_j) =0\quad \text{and} \quad(\overline{p}(\omega_j) - p(\omega_j))\lambda_j = 0.
\end{equation}
This is equivalent to
\begin{enumerate}
\item if $p(\omega_j) >0$, then $ \alpha -\sum_{i=1}^{n} g_i(\omega_j)\lambda_{i} =  f(\omega_j)$, and 
\item if $p(\omega_j) < \overline{p}(\omega_j)$, then $\lambda_j =0$.
\end{enumerate}
So, if we have an optimal solution
$p(\omega_1)$, \dots, $p(\omega_n)$ of \linprogref{D}
and the optimal value $\alpha$,
then we can use these equations as a system of equalities in $\lambda_1$, \dots, $\lambda_n$.
Note that some solutions of this system may not satisfy feasibility, i.e. they may violate $\lambda_i\ge 0$.
However, all solutions of this system that satisfy
$\lambda_i\ge 0$ are guaranteed to be optimal solutions of \linprogref{P}.

How does this system of equalities look like?
Remember that $k$ was defined as the smallest
index such that $\sum_{j=1}^k \overline{p}(\omega_j)\ge 1$.
According to \cref{eq:p(omega)}, for all $j \in \{1,\dots, k-1\}$ we have that $p(\omega_j) > 0$, so we have the following equalities: for all $j \in \{ 1,\dots, k-1\}$,
\begin{equation}
\alpha -\sum_{i=1}^{n} g_i(\omega_j)\lambda_{i} =  f(\omega_j).
\end{equation}
For all $j \in \{k+1,\dots, n\}$ we have that $p(\omega_j)=0<\overline{p}(\omega_j)$, so $\lambda_j=0$ for all $j \in \{k+1,\dots, n\}$.
For $j=k$, if $p(\omega_k) < \overline{p}(\omega_k)$, then we can also set $\lambda_k=0$.
Otherwise, we know that $p(\omega_k)=\overline{p}(\omega_k)>0$
and so we can simply impose the same equality as
for $j\in \{1,\dots,k-1\}$. Concluding,
let $k'$ be the largest index $j$ for which $p(\omega_j)=\overline{p}(\omega_j)$.
Then as the optimal solution of \linprogref{P} exists, it can be found by solving the following system:
\begin{align}
\forall j\in\{1,\dots,k'\}\colon
&\alpha -\sum_{i=1}^{k'} g_i(\omega_j)\lambda_{i} =  f(\omega_j) \\
\forall j\in\{k'+1,\dots,n\}\colon
&\lambda_j=0
\end{align}
So, effectively, all we are left with is a system of $k'$ variables in $k'$ constraints.

Note that we can modify the odds to have the same denominator (all $b_i$ are equal), so it will be much easier to solve the new system. 

Finally, note that in the first-free coupon scenario, to make a sure gain, the customer has to bet on every outcome. This implies that the only coefficients $\lambda_i$ whose value can be zero are those corresponding to the gambles in the first-free gamble chosen by the customer. Hence, in that specific case, $k' \geq n-2$.

\begin{example}\label{ex:free-coupon4}
Continuing from \cref{ex:free-coupon3}, the corresponding linear programs to $\overline{E}(g_{DL})$ are as follows:
\begin{align}
\tag{P1a}
   \linprogref{P1} \qquad\min \quad & \alpha  \\ \label{eq:P1b}\tag{P1b}
\text{subject to} \quad & \begin{cases}\alpha +3\lambda_{W} - 5\lambda_{D} -5\lambda_{L} \geq 5\\ 
 \alpha  - 4\lambda_{W}+ 13\lambda_{D} -5\lambda_{L} \geq -13\\ 
 \alpha- 4\lambda_{W} - 5\lambda_{D} +16\lambda_{L} \geq -11\end{cases}\\ \tag{P1c}
 \text{and }\quad & \lambda_{W},\ \lambda_{D},\ \lambda_{L} \geq 0,\ \alpha \text{ free},
\end{align} 
\begin{align}
\tag{D1a}
   \linprogref{D1} \qquad \max \quad &  5p(W) - 13p(D) -11p(L)\\ \label{eq:dual-ex6}\tag{D1b}
\text{subject to}\quad &\begin{cases} 0 \leq p(W) \leq 4/7\\
0 \leq  p(D) \leq 5/18\\
0 \leq p(L) \leq 5/21\\
 p(W)+p(D)+p(L) = 1.\end{cases}
\end{align} 
By \cref{eq:label_omega}, we see that 
\begin{equation}
A_{W} \subseteq A_{L} \subseteq A_{D},
\end{equation}
so an optimal solution of \linprogref{D1} is as follows: 
\begin{equation}\label{eq:dual-sol}
p(W) = \frac{4}{7}, \quad p(L) = \frac{5}{21}, \quad p(D)= 1 - \left(\frac{4}{7}+ \frac{5}{21} \right) = \frac{4}{21}.
\end{equation}
As $p(W) =  \overline{p}(W)$ and $p(L) =  \overline{p}(L)$, whilst $p(D) < \overline{p}(D)$, by the complementary slackness, the optimal solution of \linprogref{P1} must have $\lambda_D = 0$ and solves the following system:
\begin{align}\tag{P1b1}\label{eq:P1b1}
\alpha +3\lambda_{W} -5\lambda_{L}& =5\\ \tag{P1b2}\label{eq:P1b2}
\alpha -4\lambda_{W} +16\lambda_{L}&= -11,
\end{align} 
where the value of $\alpha$ is $-\frac{47}{21}$. 
We solve this system and get an optimal solution: $\lambda_{W} = \frac{18}{7} $ and $\lambda_{L} = \frac{2}{21} $. 

A strategy for James to make a guaranteed gain is as follows. He first bets \pounds$5$ on D and claims a free coupon valued \pounds$5$ to bet on L. Next, he additionally bets \pounds$\frac{18}{7}$ on W and \pounds$\frac{2}{21}$ on D. He will make a sure gain of \pounds$\frac{47}{21}$ from Forest.
\end{example}

\section{Actual football betting odds}\label{sec:Euro-odds}
In this section, we will look at some actual odds in the market, and we will check whether and how a customer can exploit those odds and free coupons in order to make a sure gain.

Consider \cref{Table:all-euro-odds} which is in  \cref{app:D}. We list betting odds provided by 27 bookmakers on the winner of the European Football Championship 2016. From \cref{Table:all-euro-odds}, the maximum betting odds on each outcome are listed in \cref{table:euro-max}.
\begin{table}\addtolength{\tabcolsep}{-3pt}   
\begin{tabular}{|c c|c c|c c|}  
\hline
Country & Odds & Country &  Odds & Country &  Odds\\ 
\hline
France & 10/3 & Austria  & 45 & Czech Republic & 135\\
Germany & 23/5 & Poland  & 50 & Slovakia & 150\\
Spain & 5 & Switzerland &66 & Rep of Ireland & 170\\
England & 9 & Russia & 85 & Iceland & 180\\
Belgium  & 57/5 & Turkey & 94  & Romania & 275 \\
Italy & 91/5 &  Wales & 100& N Ireland &400\\
Portugal  & 20 & Ukraine & 100&  Hungary & 566\\ 
Croatia & 27 &Sweden &  104  & Albania & 531\\
\hline
\end{tabular}
\captionof{table}{Table of maximum betting odds for the European Football Championship 2016}\label{table:euro-max}
\end{table}
For all $i \in \{1,\dots, 24\}$, let $a_{i}^{*}/b_{i}^{*}$ be the maximal betting odds in \cref{table:euro-max}. Since $\sum_{i=1}^{24}\frac{ b_{i}^{*}}{ a_{i}^{*} + b_{i}^{*}} =  1.0349 \geq 1$, by \cref{thm:multi-odds}, the set of desirable gambles corresponding to the odds in \cref{Table:all-euro-odds} avoids sure loss. Therefore, there is no combination of bets which results in a sure gain.

Suppose that James is interested in betting with one of them, say Bet2. As he has never bet with Bet2 before, Bet2 will give him a free coupon on his first bet with them. With free coupons, we will check whether and how James can bet to make a guaranteed gain. Let $\mathcal{D}$ be a set of desirable gambles corresponding to the odds and let $g$ be any first-free desirable gamble to the company Bet2. We want to check whether $\mathcal{D} \cup \{g\}$ avoids sure loss or not. As there are 24 possible outcomes, the total number of different first-free desirable gambles with Bet2 is $24\times 23=552$. 

Suppose that James first bets on France and then spends his free coupon on Spain. So, the the first-free desirable gamble $g_{FG}$ is
\begin{center}
\begin{tabular}{l|c c c}
Outcomes       			&France & Spain & others \\
\hline
 $g_{F}$ & $-3$ & $1 $  & $1$ \\
 \hline
 $\tilde{g}_{S}$ & $0$ & $-5 $  & $0$ \\
\hline
 $g_{FS}$ & $-3$ & $-4$  & $1$ 
\end{tabular}
\captionof{table}{James' first-free gamble}\label{table:g_FG}
\end{center} 
where $F$ and $S$ denote France and Spain respectively. Again, we calculate $\overline{E}(g_{FS})$ by the Choquet integral. We decompose $g_{FS}$ in terms of its level sets as 
\begin{equation}\label{eq:g_FG}
g_{FS} = -4I_{A_{0}} +I_{A_{1}}+4I_{A_{2}}
\end{equation}
where $A_{0} = \Omega$, $A_{1} = \Omega\setminus\{S\} $
and $A_{2} = \Omega\setminus\{F,S\}$. By \cref{thm:ne:up-prob}, we have
\begin{equation}
\overline{E}(A_{0})= 1 \qquad \overline{E}(A_{1})= 0.9810\qquad \overline{E}(A_{2})= 0.7310.
\end{equation}
By \cref{cor:decompose}, we substitute $\overline{E}(A_{i})$, $i\in\{0, 1,2\}$ to \cref{eq:g_FG} and obtain 
\begin{equation}
\overline{E}(g_{FS}) = -4\overline{E}(A_{0}) +\overline{E}(A_{1})+4\overline{E}(A_{2}) = -0.0950.
\end{equation}
Therefore, $\mathcal{D} \cup \{g_{FS}\}$ does not avoid sure loss.

Among all possible first-free gambles, we find that there are three further gambles whose $\overline{E}$ is less than zero, namely ${\overline{E}(g_{FG}) =-0.2093}$, ${\overline{E}(g_{GF}) =-0.0117}$ and ${\overline{E}(g_{GS})=-0.0950}$, where $G$ denotes Germany. So, by \cref{thm:check-asl}, $\mathcal{D} \cup \{g\}$ does not avoid sure loss when $g \in \{g_{FS}, g_{FG}, g_{GF},g_{GS}  \}$; otherwise $\mathcal{D} \cup \{g\}$ avoids sure loss. Therefore, if
\begin{enumerate}
\item James first bets on France and then spends his free coupon to bet on either Spain or Germany, or 
\item James first bets on Germany and then spends his free coupon to bet on either France or Spain,
\end{enumerate}
then there is a combination of bets for him to bet in order to make a sure gain from Bet2. 

Consider the case where James first bets \pounds$1$ on France and claims his free coupon to bet on Spain. An optimal solution of the corresponding problem \linprogref{D} (the column $p(\omega_i)$ in \cref{table:bet2}) can be found through \cref{alg:construct_p}.
Then, we can find the optimal solution of the corresponding problem \linprogref{P} by using the optimal solution of \linprogref{D} with the complementary slackness condition. The optimal solution of \linprogref{P} is presented in a column $\lambda_i$ in \cref{table:bet2}. Therefore, if James additionally bets as in column $\lambda_i$, then he will make a sure gain of \pounds$0.095$ from Bet2.

\begin{table} 
\begin{tabular}{|c|c|c|c|c|c|}  
\hline
\multirow{2}{*}{Order $\omega_i$} & \multirow{2}{*}{Countries} & \multirow{2}{*}{Odds}& \multirow{2}{*}{$\overline{p}(\omega_i)$ }&\multicolumn{2}{c|}{Optimal solutions} \\ \cline{5-6} 
& & & & $p(\omega_i)$ &   $\lambda_i$ \\ \hline
1 &Germany & $4$ & $\frac{ 1}{5 }$& $\frac{ 1}{5 }$& $1$  \\
2& England & $9$ & $\frac{ 1}{ 10}$ &$\frac{ 1}{ 10}$& $0.5$  \\
3& Belgium & $10$ & $\frac{ 1}{ 11}$ & $\frac{ 1}{ 11}$&$\frac{5}{ 11}$  \\
4& Italy 	& $16$ & $\frac{ 1}{ 17}$&$\frac{ 1}{ 17}$& $\frac{5}{ 17}$  \\
5& Portugal & $18$ & $\frac{1 }{ 19}$&$\frac{1 }{ 19}$&$\frac{5}{ 19}$   \\
6& Croatia & $25$ &  $\frac{1 }{ 26}$&$\frac{1}{26}$& $\frac{5}{ 26}$   \\
7& Austria  & $40$ & $\frac{1}{41}$ &$\frac{1}{41}$& $\frac{5}{ 41}$  \\
8& Poland  & $50$  & $\frac{ 1}{51}$&$\frac{1}{51}$& $\frac{5}{ 51}$  \\
9& Switzerland &$40$ & $\frac{1}{41}$&$\frac{1}{41}$&$\frac{5}{41}$   \\
10& Russia  & $66$  & $\frac{1 }{ 67}$& $\frac{1}{67}$&$\frac{5}{ 67}$  \\
11& Turkey  & $80$ &$\frac{1 }{81 }$   &$\frac{1 }{81 }$& $\frac{5 }{81 }$  \\
12& Wales  & $80$  & $\frac{1 }{ 81}$  &$\frac{1 }{ 81}$&$\frac{5}{ 81}$   \\
13&  Ukraine & $66$& $\frac{1}{67}$&$\frac{1}{67}$&$\frac{5}{ 67}$   \\
14& Sweden  & $80$  & $\frac{ 1}{81 }$  &$\frac{ 1}{81 }$&$\frac{5}{81 }$   \\
15& Czech Republic &$100$&$\frac{1}{101}$& $\frac{1}{101}$&$\frac{5}{101}$   \\
16& Slovakia  & $100 $& $\frac{1 }{101 }$&$\frac{1}{101}$& $\frac{5}{101}$  \\
17& Rep of Ireland&$150 $&$\frac{1}{151}$&$\frac{1}{151}$& $\frac{5}{151}$  \\
18& Iceland  & $150 $  & $\frac{1}{151}$  &$\frac{1}{151}$&$\frac{5}{151}$   \\
19& Romania  & $100$ & $\frac{1 }{101}$&$\frac{1 }{101 }$&$\frac{5 }{101 }$   \\
20& N Ireland  & $250 $&$\frac{1}{251}$ &$\frac{1 }{251}$&$\frac{5 }{251}$   \\
21& Albania  &$250 $& $\frac{1}{251}$&$\frac{1 }{251 }$& $\frac{5 }{251 }$  \\
22& Hungary  & $250 $  &  $\frac{1}{251}$ & $\frac{5}{251}$&  $\frac{5}{251}$  \\
23& France & $3$ & $\frac{1}{4}$ & $\frac{1}{4}$  &  $\frac{1}{4}$ \\
24 & Spain 	& $5$ & $\frac{ 1}{6 }$  &  $\frac{586}{579}$   & $0$  \\
\hline
\end{tabular}
\captionof{table}{A summary of odds provided by Bet2, the upper probability mass function $\overline{p}(\omega_i)$, and optimal solution of \linprogref{D} and \linprogref{P}}\label{table:bet2}
\end{table}

\section{Conclusion}\label{sec:conclusion}

In this paper, we studied whether and how a customer can exploit given betting odds and free coupons in order to make a sure gain. Specifically, we viewed these odds and free coupons as a set of desirable gambles and checked whether such a set avoids sure loss or not via the natural extension. We showed that the set avoids sure loss if, and only if, the natural extension of the first-free gamble corresponding to the free coupon is non-negative. If the set does not avoid sure loss, then a combination of bets can be derived from the optimal solution of the corresponding linear programming problem.

We showed that for this specific problem, we can easily find the natural extension through the Choquet integral. In the case that the set does not avoid sure loss, we presented how to use the Choquet integral and the complementary slackness condition to directly obtain the desired combination of bets, without actually solving linear programming problems, but instead just solving a linear system of equalities. This technique can be applied to arbitrary problems involving upper probability mass functions.

To illustrate the results, we looked at some actual  betting odds on the winning of the European Football Championship 2016 in the market, and checked avoiding sure loss. We found that any sets of desirable gambles derived from those odds avoid sure loss. Having said that, with a free coupon, we identified sets of desirable gambles that no longer avoid sure loss. So, interestingly, in this case, when a free coupon is added, there was a combination of bets from which the customer could have made a sure gain.

\section*{Acknowledgements}

We would like to acknowledge support for this project from Development and Promotion of Science and Technology Talents Project (Royal Government of Thailand scholarship). We also thank the reviewers for their constructive comments.

\bibliographystyle{plainnat}
\bibliography{references}

\appendix

\section{Proof of \cref{cor:decompose}}\label{app:A}

\begin{proof}
Since $A_{0} = \Omega$, we can write $f$ as
\begin{equation}
f =  \sum_{i=1}^{n} \lambda_{i} I_{A_{i}} + \lambda_{0}
\end{equation}
where $\lambda_{0} \in \mathbb{R}$, $\lambda_{1},\dots,\lambda_{n} > 0$ and $ A_{1} \supsetneq \dots\supsetneq A_{n}\supsetneq \emptyset$.
Then
\begin{align}
\begin{split}
-f & = -\sum_{i=1}^{n} \lambda_{i} (1-I_{A_{i}^{\mathsf{c}}})-\lambda_{0}\\
& = -\sum_{i=1}^{n} \lambda_{i}-\lambda_{0}+\sum_{i=1}^{n} \lambda_{i}I_{A_{i}^{\mathsf{c}}}.
\end{split}
\end{align}
Therefore, 
\begin{align}
\overline{E}_{\overline{p}}(f)& = -\underline{E}_{\overline{p}}(-f)\\ \label{eq:A.4}
& = -\left(-\sum_{i=1}^{n} \lambda_{i}-\lambda_{0}+\sum_{i=1}^{n}\lambda_{i}\underline{E}_{\overline{p}}(A_{i}^{\mathsf{c}})\right) \qquad \\
& = \lambda_{0}+ \sum_{i=1}^{n} \lambda_{i}(1-\underline{E}_{\overline{p}}(A_{i}^{\mathsf{c}})\\
& = \lambda_{0} + \sum_{i=1}^{n} \lambda_{i}\overline{E}_{\overline{p}}(A_{i}),
\end{align}
where \cref{eq:A.4} holds by constant additivity
and comonotone additivity \citep[p.~382, Prop.~C.5(v)\&(vii)]{2014:troffaes:decooman::lower:previsions}.
\end{proof}

\section{Proof of \cref{thm:check-asl}}
\label{app:B}

\begin{proof}
For the first part, suppose that $ f \in \mathcal{L}(\Omega)$ and  $\mathcal{D} = \{g_{i}\colon i \in \{1, \dots,n\}\}$ is a set of desirable gambles that avoids sure loss. We find that
\begin{align}\label{eq:thm:check-asl:1}
\begin{split}
\overline{E}_{\mathcal{D}}(f) & = \inf \left\lbrace \alpha \in \mathbb{R} : \alpha - f \geq \sum_{i=1}^{n} \lambda_{i}g_{i},  \lambda_{i} \geq 0 \right\rbrace \\ & = \min\left\lbrace \max_{\omega\in \Omega} \left( f(\omega)+ \sum_{i=1}^{n} \lambda_{i}g_{i}(\omega)  \right):  \lambda_{i} \geq 0\right\rbrace,
\end{split}
\end{align}
where the $\inf$ is actually a $\min$ because $\mathcal{D}$ is finite.
So, by \cref{lem:asl:condi},
\begin{equation}\label{eq:thm:check-asl:2}
\overline{E}_{\mathcal{D}}(f) \geq 0 \Longleftrightarrow \forall \lambda_i \geq 0, \max_{\omega\in \Omega} \left(\sum_{i=1}^{n} \lambda_i g_{i}(\omega) +f(\omega) \right) \geq 0.
\end{equation} 
For the second part, if $\mathcal{D}\cup \{f\}$ does not avoid sure loss, then $\overline{E}_{\mathcal{D}}(f) < 0$. So, by \cref{eq:thm:check-asl:1}, there exists an $\omega^*$ in $\Omega$ and some $\lambda_i \geq 0$ such that
\begin{equation}\label{eq:thm:check-asl:3}
  \overline{E}_{\mathcal{D}}(f)
  = f(\omega^*)+ \sum_{i=1}^{n} \lambda_{i}g_{i}(\omega^*)
  \geq f(\omega)+ \sum_{i=1}^{n} \lambda_{i}g_{i}(\omega),
  \ \forall \omega \in \Omega.
\end{equation} 
Hence there is a sure loss of at least $|\overline{E}_{\mathcal{D}}(f)|$.
\end{proof}

\section{Proof of \cref{thm:multi-odds}}\label{app:C}
\begin{proof}
Note that for each $i$ and $k$, we have
\begin{equation}\label{eq:7.16}
\frac{ a_{ik}}{b_{ik}} \leq \frac{ a_{i}^{*}}{\textstyle b_{i}^{*}}
\qquad \Longleftrightarrow \qquad
\frac{ b_{i}^{*}}{ a_{i}^{*} + b_{i}^{*}} \leq \frac{ b_{ik}}{ a_{ik} + b_{ik}}.
\end{equation}
So,
\begin{equation}\label{eq:7.17}
\frac{b_{i}^{*}}{ a_{i}^{*} + b_{i}^{*}} = \min_{k} \left\lbrace\frac{ b_{ik}}{a_{ik} + b_{ik}}\right\rbrace.
\end{equation} 
$(\Longrightarrow)$
Suppose the set of desirable gambles $\mathcal{D}$ avoids sure loss.
We will show that \cref{eq:7.15} holds.
As $\mathcal{D}$ avoids sure loss, the following system of linear inequalities:
\begin{align}\label{eq:7.18}
\forall i\colon & p(\omega_i) \geq  0 \\ \label{eq:7.19}
& \sum_{i=1}^{n} p(\omega_i) = 1\\ \label{eq:7.20}
\forall i, k\colon & \sum_{i=1}^{n} g_{ik}(\omega_i) p(\omega_i) \geq 0,
\end{align}
has a solution \citep[p.~175, ll.~10--13]{1991:walley}, say 
$p = (p(\omega_{1}),\dots,p(\omega_{n}))$.
By \cref{lem:gam-odds}, for each $i$ and $k$,
\begin{equation}\label{eq:7.21}
\frac{ b_{ik}}{ a_{ik} + b_{ik}} \geq p(\omega_{i}).
\end{equation}
Then, by \cref{eq:7.17} for each $i$,
\begin{equation}\label{eq:7.22}
\frac{b_{i}^{*}}{a_{i}^{*} + b_{i}^{*}} \geq p(\omega_{i}).
\end{equation}
Therefore, 
\begin{equation}\label{eq:7.23}
\sum_{i=1}^{n}\frac{ b_{i}^{*}}{ a_{i}^{*} + b_{i}^{*}} \geq \sum_{i=1}^{n} p(\omega_{i}) = 1. 
\end{equation}
$(\Longleftarrow)$ Suppose $
 \sum_{i=1}^{n}\frac{ b_{i}^{*}}{ a_{i}^{*} + b_{i}^{*}} \geq 1
$ holds.
Let 
\begin{equation}\label{eq:prob}
S= \sum_{i=1}^{n}\frac{ b_{i}^{*}}{ a_{i}^{*} + b_{i}^{*}} \quad \text{ and }\quad p(\omega_{i}) = \frac{ b_{i}^{*}}{ S (a_{i}^{*} + b_{i}^{*})}.
\end{equation}
If we show that $p$ is a feasible solution of \cref{eq:7.18,eq:7.19,eq:7.20}, then $\mathcal{D}$ avoids sure loss.
Note that by \cref{eq:prob}, $p(\omega_i)\geq 0$ for all $i$, $\sum_{i=1}^{n} p(\omega_i) = 1$ and with \cref{eq:7.17}, $\frac{ b_{ik}}{ a_{ik} + b_{ik}} \geq p(\omega_{i})$. So, by \cref{lem:gam-odds}, $ \sum_{i=1}^{n} g_{ik}(\omega) p(\omega_i) \geq 0$ holds for all $g_{ik} $. Therefore, $p$ is a feasible solution of \cref{eq:7.18,eq:7.19,eq:7.20} and by \citep[p.~175, ll.~10--13]{1991:walley}, $\mathcal{D}$ avoids sure loss.
\end{proof}

\begin{landscape}
\section{Betting odds on the winner of the European Football Championship 2016}\label{app:D}
\begin{table}[ht]
\centering
\begin{adjustbox}{width=1.5\textwidth}
\begin{tabular}{|c|c|c|c|c|c|c|c|c|c|c|c|c|c|c|c|c|c|c|c|c|c|c|c|c|c|c|c|} 
\hline
\multirow{2}{*}{Countries} & \multicolumn{27}{c|}{bookmakers}\\ \cline{2-28} 
 &  \rot{ Bet1}  &  \rot{ Bet2}  & \rot{ Bet3} &  \rot{ Bet4}&  \rot{ Bet5} &  \rot{ Bet6 } &  \rot{ Bet7} &  \rot{ Bet8} &  \rot{ Bet9} &  \rot{ Bet10} & \rot{ Bet11} & \rot{ Bet12} &  \rot{ Bet13} &  \rot{ Bet14} &  \rot{ Bet15} & \rot{ Bet16} & \rot{Bet17} &  \rot{ Bet18} &  \rot{ Bet19} &  \rot{ Bet20} & \rot{Bet21} & \rot{ Bet22 }&  \rot{ Bet23} &  \rot{ Bet24} & \rot{ Bet25} & \rot{ Bet26 }& \rot{ Bet27}\\ \hline
 France & 3 &  3 & 3 &  3   & 3  & 11/4  & 3 & 16/5   & 3 & 16/5 & 16/5&   3 &  3 & 3 & 16/5 & 3 & 10/3 & 16/5 & 16/5 &16/5 & 3 & 16/5 & 3 & 3 & 3 & 3 & 3  \\ \hline

Germany &4  & 4 & 9/2 & 4 & 9/2 & 4 & 4 & 9/2 & 10/3 & 9/2 & 9/2 & 9/2 & 4 & 7/2& 4 & 9/2 & 9/2 & 19/5 & 9/2 & 15/4 & 9/2 & 19/5 & 9/2 & 4 & 9/2 & 22/5  & 23/5 \\ \hline

Spain & 5 & 5 & 9/2 & 5 & 9/2 & 5 & 5 & 9/2 & 5 & 9/2 &5 & 9/2 & 5  & 5 & 5 & 5 & 9/2 & 5 & 9/2 & 5 & 9/2 & 5 & 5 & 24/5 & 24/5 & 5 &5  \\ \hline

England & 17/2 & 9 & 9 & 8 & 9 & 8 &9  & 8 & 9 & 9 &8 & 9 & 8  & 8 & 9 & 9 & 9 & 17/2 & 9 & 9 & 9 & 17/2 & 8 & 8 & 17/2 &43/5  & 9 \\ \hline

Belgium & 11 & 10 & 10 & 10 & 10 & 11 & 10 & 11 &10  & 10 & 10& 10 & 11  &10  & 11 & 11 & 11 & 9 & 10 & 9 & 10 & 9 & 10 & 10 & 54/5 & 53/5 & 57/5\\ \hline

Italy & 16 & 16 & 18 &16  & 18 & 16 & 16 & 16 & 16 & 16 &16 & 18 & 18  & 16 & 18 & 16 & 18 & 16 & 16 &14  & 16 & 16 & 14 & 17 & 18 & 89/5 & 91/5 \\ \hline

Portugal& 18 & 18 & 18 & 18 & 18 & 18 & 18 & 14 & 20 & 17 &18 & 18 & 14  & 18 & 12 & 18 & 20 & 15 & 17 & 18 &17  & 15 & 18 & 268/17 & 88/5 & 92/5 & 91/5 \\ \hline

Croatia& 25 & 25 & 22 & 25 & 22 & 25 & 25 & 25 & 20 & 25 &25 & 22 & 25  & 25 & 25 & 25 & 25 & 25 & 25 & 25 & 25 & 25 & 22 & 25 & 26 & 24 & 27\\ \hline

Austria& 40 & 40 & 33 & 33 & 33 & 40 & 40 & 40 & 33 & 40 &40 & 40 & 33  & 33 & 28 & 40 & 33 & 40 & 40 & 33 & 40 & 40 & 33 & 40 & 45 & 43 & 45 \\ \hline

Poland & 50 & 50 & 50 & 50 & 50 & 40 & 40 & 50 & 50 & 50 & 40& 50 & 50  & 50 & 50 & 40 & 50 & 45 & 50 & 40 & 50 & 45 & 50 & 50 & 47 & 48 & 50 \\ \hline

Switzerland& 66 & 40 & 66 & 50 & 66 & 66 & 50 & 50 & 66 & 66 & 50& 50 & 50  & 50 & 50 & 66 & 66 & 66 & 66 & 66 & 66 & 66 & 50 & 60 & 66 & 65 & 64\\ \hline

Russia& 66 & 66 & 80 & 66 & 80 & 80 & 66 & 66 & 66 & 80 & 66& 50 & 66  & 66 & 50 & 66 & 66 & 66 & 80 & 66 & 80 &66  & 50 & 66 & 85 & 84 & 79\\ \hline

Turkey& 80 & 80 & 80 & 80 & 80 & 80 & 80 & 66 & 80 & 80 & 66& 66 & 66  & 66 & 80 & 80 & 66 & 80 & 80 & 80 & 80 & 80 & 80 & 80 & 94 & 92 & 89 \\ \hline

Wales& 80 & 80 & 80 & 80 & 80 & 80 & 66 & 80 & 100 & 80 & 80& 66 & 66  & 66 & 100 & 66 & 66 & 80 & 80 & 80 & 80 & 80 & 60 & 80 & 89 & 81 & 89 \\ \hline

Ukraine&100  & 66 & 80 & 80 & 80 & 80 &80  & 66 & 80 & 80 &80 & 50 & 80  & 80 &50 & 50 & 80 & 90 & 80 & 100 &80  &90  & 80 & 100 & 94 & 86 & 89 \\ \hline

Sweden& 100 & 80 & 100 & 80 & 100 & 100 & 100 & 100 & 100 & 100 & 100& 100 &100   &80  &66  &100  & 100& 100 & 100 &100  &100  &100  & 80 & 100 & 104 & 90 & 99 \\ \hline

Czech Rep& 125 & 100 & 125 & 80 & 125 & 100 & 100 & 125 & 80 & 100 &100 & 125 &  66 & 100 & 100 & 125 & 100 & 100 & 100 & 100 & 100 & 100 & 100 &100  & 132 & 135 &99  \\ \hline

Slovakia& 150 & 100 & 150 & 150 & 150 & 150 & 150 & 150 & 100 & 100 & 150& 150 & 150  & 150 & 100 & 125 &125  & 187/2 & 100 & 150 & 100 & 100 & 125 & 150 & 142 & 143 &119  \\ \hline

Rep of Ireland& 150 & 150 & 150 & 150 & 150 & 125 & 150 & 100 & 150 & 150 &150 & 150 & 100  & 125 & 100 & 125 & 125 & 349/4 & 150 & 150 & 150 & 112 &  125& 150 & 170 & 156 &149  \\ \hline

Iceland& 100 & 150 & 100 & 100 & 100 & 100 & 100 & 150 & 80 & 100 &80 & 150 &  80 & 100 & 100 & 150 & 100 & 110 & 100 & 100 & 100 &110  & 60 & 100 & 180 & 179 & 149 \\ \hline

Romania& 200 & 100 & 150 & 125 & 150 & 200 & 200 & 125 & 150 & 260 & 150& 150 & 150  & 150 & 80 & 200 & 150 & 399/4 & 260 & 200 & 260 & 287/4 & 125 & 200 & 275 & 256 & 238 \\ \hline

N Ireland& 350 & 250 & 400&  400 & 400 & 350 & 350 & 300 & 300 & 400& 300 & 300  &250 &250 &300 & 250 & 400 & 359/4 & 400 & 300 & 400 &120& 350 & 400 & 389 & 377 & 376\\ \hline

Hungary& 350 & 250 & 400 & 200 & 400 & 350 & 350 & 400 & 350 & 400 & 250& 300 & 250  & 200 & 200 & 250 & 250 & 359/4 & 400 & 250 & 400 & 359/4 & 250 & 350 & 541 & 566 & 79 \\ \hline

Albania& 500 & 250 & 500 & 400 & 500 & 350 & 500 & 400 & 500 & 500 & 250& 200 & 300  & 250 & 300 & 400 & 500 & 363/4 & 500 & 300 & 500 & 177/4 & 400 & 500 & 531 & 513 & 495 \\ \hline
\end{tabular}
\end{adjustbox}
\caption{Table of betting odds on the winner of the European Football Championship 2016 where bookmaker names are modified. Collect data from www.oddschecker.com/football/euro-2016/winner on 13-06-2016.}
\label{Table:all-euro-odds}
\end{table} 
\end{landscape}

\appendix

\end{document}